\documentclass[10pt]{article}

\usepackage{authblk}
\usepackage{natbib}
\usepackage{mathtools}
\usepackage{amsmath}
\usepackage{amsthm}
\usepackage{amssymb}
\usepackage{amsbsy}
\usepackage{amsfonts}
\usepackage{amscd}
\usepackage{mathrsfs}
\usepackage{bm}
\usepackage{breqn}
\usepackage{color, soul}
\usepackage{enumerate}
\usepackage{empheq}
\usepackage{rotating}
\usepackage{ftnxtra}
\usepackage{fnpos}
\usepackage[titletoc,toc,title]{appendix}
\usepackage{euscript}
\usepackage{graphicx}
\usepackage{epsfig}
\usepackage{epstopdf}
\DeclareGraphicsExtensions{.pdf,.png,.jpg,.eps}
\usepackage{pstool}
\usepackage{upgreek}
\usepackage{mathrsfs}
\usepackage{psfrag}

\numberwithin{equation}{section}

\theoremstyle{plain}	
\newtheorem{thm}{Theorem}[section]

\newtheorem{prop}[thm]{Proposition}
\newtheorem*{prop*}{Proposition} 
\theoremstyle{definition}	

\newtheorem{remark}[thm]{Remark}
\newtheorem{example}[thm]{Example}

\usepackage{graphicx}
\usepackage{lipsum}

\usepackage{caption}
\usepackage{subcaption}

\usepackage{hyperref}
\hypersetup{colorlinks=true, linkcolor=blue}
\hypersetup{colorlinks=true,citecolor=blue}

\DeclareMathAlphabet{\mathpzc}{OT1}{pzc}{m}{it}

\usepackage{amsmath, amsthm, amssymb}

\usepackage{cleveref}

\DeclarePairedDelimiter\abs{\lvert}{\rvert}
\makeatletter
\newsavebox{\@brx}
\newcommand{\llangle}[1][]{\savebox{\@brx}{\(\m@th{#1\langle}\)}%
  \mathopen{\copy\@brx\mkern2mu\kern-0.9\wd\@brx\usebox{\@brx}}}
\newcommand{\rrangle}[1][]{\savebox{\@brx}{\(\m@th{#1\rangle}\)}%
  \mathclose{\copy\@brx\mkern2mu\kern-0.9\wd\@brx\usebox{\@brx}}}%
\let\oldabs\abs
\def\abs{\@ifstar{\oldabs}{\oldabs*}}
\makeatother

\usepackage{accents}

\newcommand{\Fe}{\accentset{e}{\mathbf{F}}}

\newcommand{\Fa}{\accentset{a}{\mathbf{F}}}

\newcommand{\n}{\accentset{1}{\boldsymbol{\mathsf{n}}}}
\newcommand{\nn}{\accentset{2}{\boldsymbol{\mathsf{n}}}}
\newcommand{\nnn}{\accentset{3}{\boldsymbol{\mathsf{n}}}}

    {\end{bmatrix}}%

\usepackage{enumitem}

\usepackage[utf8]{inputenc}
\usepackage{dutchcal}
\usepackage{multirow}

\renewcommand{\arraystretch}{2.0}

\usepackage{xcolor}

\usepackage{setspace}

\usepackage{tikz}
\usetikzlibrary{positioning,arrows.meta,calc}

\begin{document}

\title{\textbf{
Universal deformations and universal residual stresses in incompressible isotropic Cauchy elasticity
}}

\author[1,2]{Arash Yavari\thanks{Corresponding author, e-mail: arash.yavari@ce.gatech.edu}}
\affil[1]{\small \textit{School of Civil and Environmental Engineering, Georgia Institute of Technology, Atlanta, GA 30332, USA}}
\affil[2]{\small \textit{The George W. Woodruff School of Mechanical Engineering, Georgia Institute of Technology, Atlanta, GA 30332, USA}}

\maketitle

\begin{abstract}
We study universal deformations in incompressible isotropic Cauchy elastic solids with residual stress, without assuming any specific origin for the residual stress. Starting from the constitutive representation of the Cauchy stress as an isotropic tensor-valued function of strain and residual stress, we derive the universality constraints for residually-stressed incompressible isotropic Cauchy elastic solids. We show that for the six known families of universal deformations the set of universal deformations is identical to that of incompressible isotropic elasticity in the absence of residual stress. We also show that residual stress does not enlarge the space of universal deformations. We then determine the universal residual stress fields corresponding to the six known families of universal deformations. Assuming that the residual stress field has the same symmetry as the corresponding universal deformation, the universality constraints reduce to systems of ordinary differential equations that can be solved explicitly. The resulting universal residual stress fields are characterized and discussed for each family. 
\end{abstract}

\begin{description}
\item[Keywords:] Universal deformation, Cauchy elasticity, hyperelasticity, Green elasticity, isotropic solids, residual stress.
\end{description}

\tableofcontents

%-----------------------------
%-----------------------------
\section{Introduction}

A \emph{universal deformation} is a deformation that can be sustained in the absence of body forces in every material belonging to a prescribed class by applying suitable boundary tractions. Although the boundary tractions required to maintain the deformation depend on the particular material, the deformation itself is independent of the constitutive model. The notion of universal deformations, also known as ``controllable" or ``general" deformations, originated in the pioneering works of Ericksen \citep{Ericksen1954,Ericksen1955}, which were motivated by the earlier studies of Rivlin \citep{Rivlin1948,Rivlin1949a,Rivlin1949b}. Ericksen showed that all universal deformations of homogeneous compressible isotropic solids are homogeneous \citep{Ericksen1955}. In the incompressible isotropic case, he identified four nontrivial families of universal deformations \citep{Ericksen1954}.
Ericksen further conjectured that any deformation with constant principal invariants must be homogeneous. This conjecture was shown to be false \citep{Fosdick1966}. Shortly thereafter, a fifth family of universal deformations, consisting of inhomogeneous deformations with constant principal invariants, was discovered independently by \citet{SinghPipkin1965} and \citet{KlingbeilShield1966}. Whether additional inhomogeneous universal deformations with constant principal invariants exist remains an open problem---Ericksen's Problem \citep{Marris1970,Kafadar1972,Marris1975,Marris1982,Fosdick1969,Fosdick1971}.

Universal deformations have played a fundamental role in nonlinear elasticity and related theories. They have been extensively used in semi-inverse methods for constructing exact solutions \citep{Knowles1979,Tadmor2012,Goriely2017}, in the design of experiments aimed at identifying constitutive equations \citep{Rivlin1951,Saccomandi2001}, and in the analysis of distributed eigenstrains and defects \citep{Wesolowski1968,Gairola1979,Zubov1997,YavariGoriely2012a,YavariGoriely2012b,YavariGoriely2013a,YavariGoriely2014,Golgoon2018,YavariGoriely2013b,YavariGoriely2015,Golgoon2017,Yavari2021Eshelby}. Universal deformations have also served as benchmark problems for computational methods in nonlinear elasticity \citep{Dragoni1996,Saccomandi2001,Chi2015,Shojaei2018} and have been employed in the determination of effective properties of nonlinear composite materials \citep{Hashin1985,Lopez2012,Golgoon2021}.

In linear elasticity, the analogous notion is that of a \emph{universal displacement} \citep{Truesdell1966,Gurtin1972,Carroll1973,Yavari2020}. The dependence of universal displacements on material symmetry has been studied systematically. Among linearly elastic solids, isotropic materials have the largest class of universal displacements, whereas triclinic materials possess the smallest. These investigations have subsequently been generalized to inhomogeneous solids \citep{YavariGoriely2022}, Cauchy elasticity \citep{YavariSfyris2025}, and linear anelasticity \citep{Yavari2022Anelastic-Universality}.

More recently, \citet{SfyrisYavari2026} extended the study of universal displacements to three-dimensional linear strain-gradient elasticity. For the Toupin--Mindlin first strain-gradient theory and the complete symmetry classification of strain-gradient elastic solids, they determined the universal displacement fields for all $48$ material symmetry classes. It was shown that for several high-symmetry classes, including isotropic solids, the strain-gradient universality constraints do not impose any additional restrictions beyond those of classical linear elasticity. For lower-symmetry classes, however, the higher-order elastic constants introduce additional universality constraints, and the corresponding universal displacement fields form proper subsets of the classical universal displacement families

Ericksen's theory has been extended in several directions, including anisotropic elasticity \citep{YavariGoriely2021,Yavari2021,YavariGoriely2023Universal}, implicit elasticity \citep{Yavari2024ImplicitElasticity}, Cauchy elasticity \citep{Yavari2024Cauchy}, and anelasticity \citep{YavariGoriely2016,Goodbrake2020}. Although Cauchy elasticity admits a broader constitutive structure than Green elasticity, the corresponding universal deformations and universal inhomogeneities are identical in the two theories. This equivalence was recently extended to anisotropic Cauchy elasticity by \citet{MotaghianYavari2026}, who studied universal deformations and universal material preferred directions in transversely isotropic, orthotropic, and monoclinic solids. They showed that the universality constraints of Cauchy elasticity and hyperelasticity are equivalent. Consequently, the universal deformations and the corresponding universal material preferred directions coincide with those of the corresponding anisotropic hyperelastic solids. These results demonstrate that, for these symmetry classes, universality is governed by material symmetry rather than by the existence of a strain-energy function. In the context of compressible anelasticity, universal deformations were shown to be covariantly homogeneous \citep{YavariGoriely2016}. For incompressible anelastic solids, the universal eigenstrain distributions associated with the six known families of universal deformations were determined by \citet{Goodbrake2020}. More recently, universality has been investigated in accreting bodies \citep{YavariPradhan2022,YavariAccretion2023,PradhanYavari2023} and in liquid crystal elastomers \citep{LeeBhattacharya2023,MihaiGoriely2023}.

The influence of internal constraints on universality has also been investigated, particularly in fiber-reinforced solids. \citet{Beskos1972} studied homogeneous compressible isotropic solids reinforced by inextensible fibers and examined whether the universal deformations of incompressible isotropic elasticity remain universal in this class of materials. An analogous investigation for incompressible isotropic solids was reported in \citep{Beskos1973}. For homogeneous compressible isotropic solids reinforced by a single family of inextensible fibers, \citet{Beatty1978,Beatty1989} determined the fiber distributions for which homogeneous deformations are universal. He showed that only three such distributions are possible, and that in each case the fibers are straight in both the reference and deformed configurations.

More recently, \citet{Yavari2023} analyzed universal displacements in compressible anisotropic linear elastic solids reinforced by a uniform distribution of inextensible straight fibers and characterized the corresponding classes of universal displacements for all compatible material symmetry classes. In a subsequent study, \citet{Yavari2025Fibers} presented the first systematic classification of universal deformations in compressible isotropic Cauchy elastic solids reinforced by a single family of inextensible fibers. For deformed fibers that remain straight, he identified a new inhomogeneous non-isochoric family of universal deformations---\emph{Family $Z_1$}. He further showed that if all principal invariants are constant, then only homogeneous universal deformations are possible. When the deformed fibers have non-vanishing curvature, the universality constraints become substantially more complicated, and the existence of universal deformations remains an open problem. Universality has also been studied in the presence of other internal constraints, including the inexpansibility constraint \citep{Kurashige1985} and in-plane rigidity \citep{Tommasi1996}.

In contrast to eigenstrains, the role of residual stress in universality has only recently been investigated. Residual stresses are self-equilibrated stresses that exist in the absence of external loads and arise in a wide range of natural and engineered solids through growth, plastic deformation, thermal processes, manufacturing, and other anelastic mechanisms \citep{Hoger1985,MerodioOgdenRodriguez2013,MerodioOgden2016}. Because residual stress enters the constitutive equations explicitly, it is natural to ask whether its presence enlarges the class of universal deformations.
For compressible isotropic Cauchy elastic solids with residual stress, this question was addressed by \citet{YavariMerodioShariff2025}. It was shown that any universal deformation must be homogeneous and that the associated universal residual stress field must also be homogeneous. Since a non-trivial residual stress field is necessarily inhomogeneous, it follows that universal residual stresses must vanish. Consequently, a compressible isotropic Cauchy elastic solid with a non-trivial residual stress distribution does not admit universal deformations. This result is consistent with the earlier work of \citet{YavariGoriely2016}, where universal eigenstrains were shown to be zero-stress (impotent).

The objective of the present paper is to study the incompressible counterpart of this problem. In contrast to the compressible case, incompressible isotropic solids admit at least five families of inhomogeneous universal deformations. We investigate whether these universal deformations remain universal in the presence of residual stress and determine the corresponding classes of universal residual stress fields. We also show that the presence of residual stress does not enlarge the space of universal deformations. 
For each of the six known families of universal deformations, we determine the most general universal residual stress fields that have the same symmetry as the corresponding universal deformations and are compatible with the universality constraints.

The present work is part of a broader research effort that we refer to as the \emph{universal program}. Figure \ref{Universal-Program} summarizes the notions of universality that have been studied to date and their extensions to different classes of materials and constitutive theories.

%-----------------------------
%-----------------------------
\begin{figure}[ht!]
\vskip 0.2in
\centering \includegraphics[width=.97\linewidth]{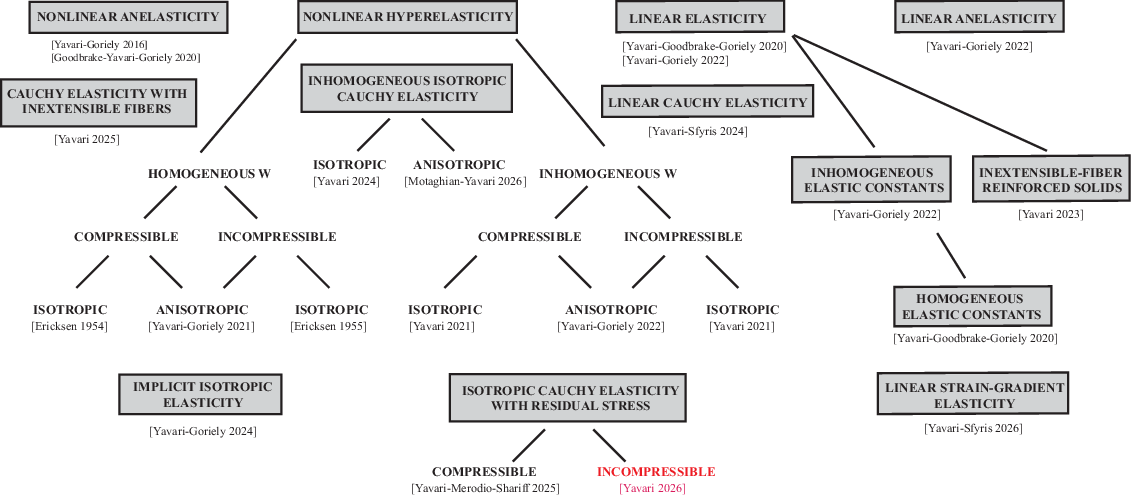}
\vskip 0.1in
\caption{
The universal program. The diagram summarizes the classification of universal deformations and related universal fields, including universal displacement fields, universal material preferred directions, universal inhomogeneities, universal eigenstrains, and universal residual stresses. Shown are the classes of materials and theories that have been studied to date, including nonlinear hyperelasticity, nonlinear Cauchy elasticity, nonlinear anelasticity, linear elasticity, linear Cauchy elasticity, linear strain-gradient elasticity, and linear anelasticity. The diagram also includes isotropic and anisotropic solids, compressible and incompressible materials, solids with homogeneous and inhomogeneous elastic constants, implicit elasticity, inextensible-fiber reinforced solids, and residually-stressed Cauchy elastic solids. 
The present work extends the universal program to incompressible isotropic Cauchy elastic solids with residual stress and determines the universal residual stress fields associated with the known families of universal deformations. The branch corresponding to the present work is highlighted in red.
}
\label{Universal-Program}
\end{figure}
%-----------------------------
%-----------------------------

This paper is organized as follows. In \S\ref{C-Elasticity}, we briefly review Cauchy elasticity and the constitutive representation of isotropic Cauchy elastic solids with residual stress. 
In \S\ref{C-Elasticity-UC}, we derive the universality constraints for incompressible isotropic Cauchy elastic solids with residual stress and show that residual stress does not enlarge the space of universal deformations. We also discuss the symmetry reduction of the resulting system of nonlinear universality PDEs and review the six known families of universal deformations of incompressible isotropic elasticity together with their symmetry structure.
In \S\ref{Universal-R-Stressed}, we determine the universal residual stress fields corresponding to the six known families of universal deformations. Assuming that the residual stress field has the same symmetry as the corresponding universal deformation, the universality constraints reduce to systems of ordinary differential equations that can be solved explicitly.
Conclusions are presented in \S\ref{Sec:Conclusions}.

%-----------------------------
%-----------------------------
\section{Cauchy Elasticity} \label{C-Elasticity}

Let us consider a body whose undeformed configuration is identified with an embedded submanifold $\mathcal{B}$ of the Euclidean ambient space $\mathcal{S}$. The flat metric of the ambient space is denoted by $\mathbf{g}$, and the induced metric on the reference configuration is $\mathbf{G} = \mathbf{g}\big|_{\mathcal{B}}$. A deformation is a smooth map $\varphi:\mathcal{B} \to \mathcal{C} \subset \mathcal{S}$, where $\mathcal{C} = \varphi(\mathcal{B})$ denotes the deformed configuration. The tangent map of $\varphi$ defines the deformation gradient $\mathbf{F} = T\varphi$, a metric-independent linear map $\mathbf{F}(X): T_X\mathcal{B} \to T_{\varphi(X)}\mathcal{C}$ at each material point $X \in \mathcal{B}$. 

In local coordinates $\{X^A\}$ on $\mathcal{B}$ and $\{x^a\}$ on $\mathcal{C}$, the deformation gradient has components $F^a{}_A = \frac{\partial \varphi^a}{\partial X^A}$. Its transpose $\mathbf{F}^{\mathsf{T}}$ has components $(F^{\mathsf{T}})^A{}_a = g_{ab}\, F^b{}_B\, G^{AB}$. The right Cauchy-Green strain tensor is defined by $\mathbf{C} = \mathbf{F}^{\mathsf{T}}\mathbf{F}$, with components $C^A{}_B = (F^{\mathsf{T}})^A{}_a\, F^a{}_B$. This gives the standard expression
%-----------------------------
\begin{equation}
	C_{AB} = (g_{ab} \circ \varphi)\, F^a{}_A\, F^b{}_B\,,
\end{equation}
%-----------------------------
which shows that the right Cauchy-Green strain is the pull-back of the spatial metric, i.e., $\mathbf{C}^\flat = \varphi^* \mathbf{g}$, where the flat operator is defined via the reference metric $\mathbf{G}$. 

The left Cauchy-Green strain tensor is defined as $\mathbf{B}^{\sharp} = \varphi^*(\mathbf{g}^{\sharp})$, with components $B^{AB} = F^{-A}{}_a\, F^{-B}{}_b\, g^{ab}$, where $F^{-A}{}_a$ are the components of $\mathbf{F}^{-1}$. The spatial counterpart of $\mathbf{C}^\flat$ is $\mathbf{c}^\flat = \varphi_*\mathbf{G}$, whose components are $c_{ab} = F^{-A}{}_a\, F^{-B}{}_b\, G_{AB}$. Likewise, the spatial analogue of $\mathbf{B}^{\sharp}$ is $\mathbf{b}^{\sharp} = \varphi_*(\mathbf{G}^{\sharp})$, with components $b^{ab} = F^a{}_A\, F^b{}_B\, G^{AB}$. Note that $\mathbf{b} = \mathbf{c}^{-1}$.

The tensors $\mathbf{C}$ and $\mathbf{b}$ share the same principal invariants $I_1$, $I_2$, and $I_3$, which are defined as follows \citep{Ogden1984,MarsdenHughes1994}:
%-----------------------------
\begin{equation}
\begin{aligned}
	I_1 &= \operatorname{tr}\mathbf{b} = b^{ab} g_{ab}\,, \\
	I_2 &= \frac{1}{2}\left(I_1^2 - \operatorname{tr} \mathbf{b}^2\right)
	= \frac{1}{2}\left(I_1^2 - b^{ab} \,b^{cd} g_{ac} \,g_{bd}\right)\,, \\
	I_3 &= \det \mathbf{b}\,.
\end{aligned}
\end{equation}
%-----------------------------

In Cauchy elasticity, the stress at a material point and at a given instant depends explicitly on the strain at that same point and time \citep{Cauchy1828,Truesdell1952,TruesdellNoll2004,YavariGoriely2025}. However, the existence of an energy function is not guaranteed.\footnote{It is important to emphasize that Cauchy elasticity does not describe all elastic materials. A more general class of elastic solids are described by implicit constitutive relations of the form $\boldsymbol{\mathsf{f}}(\boldsymbol{\sigma},\mathbf{b})=\mathbf{0}$ \citep{Morgan1966,Rajagopal2003,Rajagopal2007}. Cauchy elasticity is as a special case within this broader class.} In terms of the first Piola-Kirchhoff stress tensor \citep{Truesdell1952,TruesdellNoll2004,Ogden1984}, one has
%----------------------------------
\begin{equation} \label{P-Cauchy}
	\mathbf{P}=\hat{\mathbf{P}}(X,\mathbf{F},\mathbf{G},\mathbf{g})\,.
\end{equation}
%----------------------------------
Objectivity implies that the second Piola-Kirchhoff stress must take the form \citep{TruesdellNoll2004}
%----------------------------------
\begin{equation} \label{S-Cauchy}
	\mathbf{S}=\hat{\mathbf{S}}(X,\mathbf{C}^\flat,\mathbf{G})\,.
\end{equation}
%----------------------------------
For isotropic materials, one arrives at the classical representation \citep{RivlinEricksen1955,Wang1969,Boehler1977}
%---------------------------------
\begin{equation}
	\mathbf{S}=\Lambda_0 \mathbf{G}^\sharp+\Lambda_1 \mathbf{C}^\sharp
	+\Lambda_{-1} \mathbf{C}^{-\sharp}\,,
\end{equation}
%---------------------------------
where $\Lambda_i=\Lambda_i(X,I_1,I_2,I_3)$, $i=-1,0,1$, and $\sharp$ denotes the sharp operator associated with the metric $\mathbf{G}$. 

The Cauchy stress for compressible isotropic Cauchy elastic materials admits the representation
%---------------------------------
\begin{equation}
	\boldsymbol{\sigma} =\alpha\, \mathbf{g}^\sharp+\beta \mathbf{b}^\sharp
	+\gamma\mathbf{c}^\sharp\,, 
\end{equation}
%---------------------------------
where $\alpha=\alpha(X,I_1,I_2,I_3)$, $\beta=\beta(X,I_1,I_2,I_3)$, and $\gamma=\gamma(X,I_1,I_2,I_3)$ are arbitrary constitutive functions.

%-----------------------------
%----------------------------
\subsection{Cauchy elastic solids with distributed eigenstrains} 

An elastic body with eigenstrains is residually-stressed, in general. Deformation gradient is multiplicatively decomposed as $\mathbf{F}=\Fe\Fa$, where $\Fe$ and $\Fa$ are the elastic and anelastic distortions, respectively.\footnote{See \citep{SadikYavari2017, yavariSozio2023} for more details on this decomposition and its history.} These distortions are not compatible, in general.
Imagine that the body in its reference configuration is partitioned into a large number of small (infinitesimal) pieces. The anelastic distortion $\Fa$ maps each small piece to its local relaxed state.
Let us consider an infinitesimal line element in the reference configuration. It is represented as $\mathbf{M} dS$, where $\mathbf{M}$ is a unit vector. $\Fa\mathbf{M} dS$ is relaxed line element.
Suppose the Euclidean metric on the body is $\mathring{\mathbf{G}}$. The squared length of the relaxed line element is calculated as
%---------------------------------
\begin{equation}
	\llangle \Fa\mathbf{M} dS,\Fa\mathbf{M} dS\rrangle_{\mathring{\mathbf{G}}}
	=\llangle \mathbf{M} dS, \mathbf{M} dS\rrangle_{\Fa^*\mathring{\mathbf{G}}}\,,
\end{equation}
%---------------------------------
where $\mathbf{G}=\Fa^*\mathring{\mathbf{G}}=\Fa^\star\mathring{\mathbf{G}}\Fa$ is called the material metric \citep{YavariGoriely2013,Yavari2021Eshelby}. It is non-flat, i.e., it has non-vanishing Riemann curvature, in general.

In the presence of eigenstrains, the constitutive equations of a Cauchy anelastic solid has the form \eqref{P-Cauchy} or \eqref{S-Cauchy}, where $\mathbf{G}$ is the material metric, see \citep{YavariGoriely2025} for more details.

%-----------------------------
%----------------------------
\subsection{Cauchy elastic solids with residual stress} 

Consider a body made of a compressible Cauchy elastic solid that is isotropic in the absence of residual stress. 
We assume that the body is residually stressed and denote the second Piola--Kirchhoff stress by $\mathring{\mathbf{S}}$. The residual stress field corresponds to the embedding $\iota:\mathcal{B} \hookrightarrow \mathcal{S}$ whose right Cauchy--Green strain is $\mathring{\mathbf{C}}^\flat=\iota^*\mathbf{g}=\mathbf{G}$.
In the absence of body forces and boundary tractions, residual stress is self-equilibrated; hence $\operatorname{Div}_{\mathring{\mathbf{C}}^\flat}\mathring{\mathbf{S}}=\operatorname{Div}_{\mathbf{G}}\mathring{\mathbf{S}}=\mathbf{0}$ in $\mathcal{B}$ and $\mathring{\mathbf{S}}\mathbf{N}=\mathbf{0}$ on $\partial\mathcal{B}$, where $\mathbf{N}$ is the unit normal to the undeformed boundary.\footnote{Recall that  when the equilibrium equations are written in terms of the second Piola--Kirchhoff stress, the divergence operator is taken with respect to the metric $\mathbf{C}^\flat=\varphi^*\mathbf{g}$.} 
We denote the push-forward of $\mathring{\mathbf{S}}$ to the current configuration by $\mathring{\boldsymbol{\sigma}}$, i.e., $\mathring{\boldsymbol{\sigma}}=\varphi_*\mathring{\mathbf{S}}=\mathbf{F}\mathring{\mathbf{S}}\mathbf{F}^\star$.\footnote{Note that symmetry of $\mathring{\mathbf{S}}$ implies the symmetry of $\mathring{\boldsymbol{\sigma}}$, i.e., $\mathring{\boldsymbol{\sigma}}^\star=\mathring{\boldsymbol{\sigma}}$.}

The Cauchy stress $\boldsymbol{\sigma}$ is an isotropic tensor-valued function of the two symmetric second-order tensors $\mathbf{b}$ and $\mathring{\boldsymbol{\sigma}}$. As is well known from the theory of invariants such a function can be written in terms of a basis of nine independent scalar invariants and a corresponding representation involving symmetric tensor products and contractions \citep{Spencer1970,Smith1971,Boehler1979,Boehler1987}. Specifically, one has
%-----------------------------
\begin{equation} \label{Cauchy-Stress-Representation}
\begin{aligned}
	\boldsymbol{\sigma} &= \alpha\, \mathbf{g}^\sharp
	+ \beta\, \mathbf{b}^\sharp
	+ \gamma\, \mathbf{c}^\sharp
	+ \zeta\, \mathring{\boldsymbol{\sigma}}
	+ \omega\, \mathring{\boldsymbol{\sigma}}^2
	+ \xi\, (\mathbf{b} \, \mathring{\boldsymbol{\sigma}} + \mathring{\boldsymbol{\sigma}} \, \mathbf{b})  
	+ \eta\, (\mathbf{c}\, \mathring{\boldsymbol{\sigma}} + \mathring{\boldsymbol{\sigma}} \, \mathbf{c})
	+\chi \big(
	\mathbf{b}\,\mathring{\boldsymbol{\sigma}}^2+\mathring{\boldsymbol{\sigma}}^2\,\mathbf{b}
	\big)
	\,,
\end{aligned}
\end{equation}
%-----------------------------
where the scalar-valued response functions $\alpha, \beta, \gamma, \zeta, \omega, \xi, \eta, \chi$ depend smoothly on the following ten functionally independent invariants defined as\footnote{For a symmetric tensor $\mathbf{k}$, we have $k^a{}_b = k_b{}^a$, and hence both are denoted by $k^a_b$.}
%-----------------------------
\begin{equation}
\begin{aligned}
	I_1 &= \operatorname{tr} \mathbf{b}^\sharp = b^{ab}\, g_{ab}\,, \\
	I_2 &= \frac{1}{2} \left( I_1^2 - b^{ab} b^{cd}\, g_{ac}\, g_{bd} \right)\,, \\
	I_3 &= \det \mathbf{b}\,, \\
	I_4 &= \operatorname{tr} \mathring{\boldsymbol{\sigma}} = \mathring{\sigma}^{ab} g_{ab}\,, \\
	I_5 &= \operatorname{tr}(\mathring{\boldsymbol{\sigma}}^2) 
	= \mathring{\sigma}^{ac}\, \mathring{\sigma}_c^b\, g_{ab}\,, \\
	I_6 &= \operatorname{tr}(\mathring{\boldsymbol{\sigma}}^3) 
	= \mathring{\sigma}^{ad}\, \mathring{\sigma}_d^c\, \mathring{\sigma}_c^b\, g_{ab}\,, \\
	I_7 &= \operatorname{tr}(\mathbf{b} \, \mathring{\boldsymbol{\sigma}}) 
	= b^a_c\, \mathring{\sigma}^{cb}\, g_{ab}\,, \\
	I_8 &= \operatorname{tr}(\mathbf{b}^2 \, \mathring{\boldsymbol{\sigma}}) 
	= b^{ad}\, b_{dc}\, \mathring{\sigma}^{cb}\, g_{ab}\,, \\
	I_9 &= \operatorname{tr}(\mathbf{b} \, \mathring{\boldsymbol{\sigma}}^2) 
	= b^a_c\, \mathring{\sigma}^{cd}\, \mathring{\sigma}_d^b\, g_{ab}\,, \\
	I_{10} &= \operatorname{tr}(\mathbf{b}^2 \, \mathring{\boldsymbol{\sigma}}^2) 
	= b^a_c\,b^c_d\,\mathring{\sigma}^d_e\,\mathring{\sigma}^e_a \,.
\end{aligned}
\end{equation}
%-----------------------------
In coordinates, the Cauchy stress has the representation
%-----------------------------
\begin{equation}
\begin{aligned}
	\sigma^{ab} &= \alpha\, g^{ab}
	+ \beta\, b^{ab}
	+ \gamma\, c^{ab}
	+ \zeta\, \mathring{\sigma}^{ab}+ \omega\, \mathring{\sigma}^{an}\, \mathring{\sigma}_n^b
	+ \xi\, \left(b^a_n\, \mathring{\sigma}^{nb} + \mathring{\sigma}^{an}\, b_n^b \right)  
	+ \eta\, \left(c^a_n\, \mathring{\sigma}^{nb} + \mathring{\sigma}^{an}\, c_n^b \right)\\
	&\quad
	+\chi \left(b^a_n\, \mathring{\sigma}^{2nb} + \mathring{\sigma}^{2an}\, b_n^b \right)
	\,,
\end{aligned}
\end{equation}
%-----------------------------
where $\mathring{\sigma}^{2nb} =\mathring{\sigma}^n_k \,\mathring{\sigma}^{kb}$.

%-----------------------------
\begin{remark}
Although the material is assumed to be isotropic in the absence of residual stress, the presence of a nontrivial residual stress field $\mathring{\boldsymbol{\sigma}}$ induces an effective anisotropy. In particular, $\mathring{\boldsymbol{\sigma}}$ plays the role of a structural tensor, and the material symmetry group is reduced to the subgroup of $\mathrm{Orth}(\mathbf{G})$ that leaves $\mathring{\boldsymbol{\sigma}}$ invariant. Consequently, the constitutive response is no longer isotropic in the classical sense, but is instead an isotropic function of the pair $(\mathbf{b}, \mathring{\boldsymbol{\sigma}})$. The resulting symmetry class is determined by the spectral structure of $\mathring{\boldsymbol{\sigma}}$: a hydrostatic residual stress preserves isotropy, while more general residual stresses lead to transversely isotropic, orthotropic, or lower symmetry classes.
\end{remark}
%-----------------------------

For an incompressible Cauchy elastic solid with residual stress, $I_3=1$, one has the following representation for the Cauchy stress
%-----------------------------
\begin{equation} 
\begin{aligned}
	\boldsymbol{\sigma} &= (-p+\alpha)\, \mathbf{g}^\sharp
	+ \beta\, \mathbf{b}^\sharp
	+ \gamma\, \mathbf{c}^\sharp
	+ \zeta\, \mathring{\boldsymbol{\sigma}}
	+ \omega\, \mathring{\boldsymbol{\sigma}}^2
	+ \xi\, (\mathbf{b} \, \mathring{\boldsymbol{\sigma}} + \mathring{\boldsymbol{\sigma}} \, \mathbf{b})  
	+ \eta\, (\mathbf{c}\, \mathring{\boldsymbol{\sigma}} + \mathring{\boldsymbol{\sigma}} \, \mathbf{c})
	+\chi \big(\mathbf{b}\,\mathring{\boldsymbol{\sigma}}^2+\mathring{\boldsymbol{\sigma}}^2\,\mathbf{b}
	\big)
	\,,
\end{aligned}
\end{equation}
%-----------------------------
where $p$ is the Lagrange multiplier corresponding to the incompressibility constraint. 
As $p$ is an unknown at this stage, one can replace $-p+\alpha$ by $-p$, and hence
%-----------------------------
\begin{equation} \label{Cauchy-Stress-Representation-Incompressible}
\begin{aligned}
	\boldsymbol{\sigma} &= -p\, \mathbf{g}^\sharp
	+ \beta\, \mathbf{b}^\sharp
	+ \gamma\, \mathbf{c}^\sharp
	+ \zeta\, \mathring{\boldsymbol{\sigma}}
	+ \omega\, \mathring{\boldsymbol{\sigma}}^2
	+ \xi\, (\mathbf{b} \, \mathring{\boldsymbol{\sigma}} + \mathring{\boldsymbol{\sigma}} \, \mathbf{b})  
	+ \eta\, (\mathbf{c}\, \mathring{\boldsymbol{\sigma}} + \mathring{\boldsymbol{\sigma}} \, \mathbf{c})
	+\chi \big(\mathbf{b}\,\mathring{\boldsymbol{\sigma}}^2+\mathring{\boldsymbol{\sigma}}^2\,\mathbf{b}
	\big)
	\,,
\end{aligned}
\end{equation}
%-----------------------------
In components, we have
%-----------------------------
\begin{equation}
\begin{aligned}
	\sigma^{ab} &= -p\, g^{ab}
	+ \beta\, b^{ab}
	+ \gamma\, c^{ab}
	+ \zeta\, \mathring{\sigma}^{ab}+ \omega\, \mathring{\sigma}^{an}\, \mathring{\sigma}_n^b
	+ \xi\, \left(b^a_n\, \mathring{\sigma}^{nb} + \mathring{\sigma}^{an}\, b_n^b \right) \\
	&\quad + \eta\, \left(c^a_n\, \mathring{\sigma}^{nb} + \mathring{\sigma}^{an}\, c_n^b \right)
	+ \chi \left(b^a_n\, \mathring{\sigma}^n_k \,\mathring{\sigma}^{kb} 
	+ \mathring{\sigma}^{ak}\,\mathring{\sigma}^n_k\, b_n^b \right)
	\,.
\end{aligned}
\end{equation}
%-----------------------------

%-----------------------------
%-----------------------------
\section{Universal Deformations and Residual Stresses in Incompressible Isotropic Cauchy Elasticity} \label{C-Elasticity-UC}

For a residually-stressed Cauchy elastic solid, equilibrium equations in the absence of body forces read $\operatorname{div}\boldsymbol{\sigma}=\mathbf{0}$.
Substituting \eqref{Cauchy-Stress-Representation} into the equilibrium equations, one obtains
%-----------------------------
\begin{equation} \label{Equilibrium-RSCE}
\begin{aligned}
	\operatorname{div}\boldsymbol{\sigma} &= - \operatorname{grad} p
	+ \beta\, \operatorname{div}\mathbf{b}^\sharp
	+ \gamma\, \operatorname{div}\mathbf{c}^\sharp
	+ \mathbf{b}^\sharp\cdot \nabla\beta
	+ \mathbf{c}^\sharp\cdot \nabla\gamma
	+\zeta \operatorname{div}\mathring{\boldsymbol{\sigma}}
	+ \mathring{\boldsymbol{\sigma}}\cdot \nabla\zeta
	+ \omega\, \operatorname{div}\mathring{\boldsymbol{\sigma}}^2
	+ \mathring{\boldsymbol{\sigma}}^2\cdot\nabla\omega \\
	&\quad
	+ \xi\, \operatorname{div}
	(\mathbf{b} \cdot \mathring{\boldsymbol{\sigma}} + \mathring{\boldsymbol{\sigma}} \cdot \mathbf{b})  
	+ \eta\, \operatorname{div}
	(\mathbf{c}\cdot \mathring{\boldsymbol{\sigma}} + \mathring{\boldsymbol{\sigma}} \cdot \mathbf{c})
	+ (\mathbf{b} \cdot \mathring{\boldsymbol{\sigma}} + \mathring{\boldsymbol{\sigma}} \cdot \mathbf{b})\cdot\nabla\xi  
	+ (\mathbf{c}\cdot \mathring{\boldsymbol{\sigma}} + \mathring{\boldsymbol{\sigma}} \cdot \mathbf{c})\cdot\nabla\eta \\
	&\quad
	+ \big(\mathbf{b}\,\mathring{\boldsymbol{\sigma}}^2+\mathring{\boldsymbol{\sigma}}^2\,\mathbf{b}\big)\cdot \nabla\chi
	+\chi \operatorname{div}
	\big(\mathbf{b}\,\mathring{\boldsymbol{\sigma}}^2+\mathring{\boldsymbol{\sigma}}^2\,\mathbf{b}\big)
	=\mathbf{0}\,.
\end{aligned}
\end{equation}
%-----------------------------
In components, $\sigma^{am}{}_{|m}=0$, which can be rewritten as
%-----------------------------
\begin{equation}
\begin{aligned}
p_{,a} &= 
 \beta_{,m}\, b_a^m + \beta\, b_a^m{}_{|m}
+ \gamma_{,m}\, c_a^m + \gamma\, c_a^m{}_{|m}
+ \zeta\, \mathring{\sigma}_a^m{}_{|m}
+ \zeta_{,m}\, \mathring{\sigma}_a^m  
+ \omega_{,m}\, (\mathring{\sigma}_a^n\, \mathring{\sigma}_n^m)
+ \omega\Big(
      \mathring{\sigma}_a^n{}_{|m}\, \mathring{\sigma}_n^m
    + \mathring{\sigma}_a^n\, \mathring{\sigma}_n^m{}_{|m}
   \Big) \\
&\quad
+ \xi_{,m}\, \big(b_a^n\, \mathring{\sigma}_n^m + \mathring{\sigma}_a^n\, b_n^m\big)
+ \xi\Big(
      b_a^n{}_{|m}\, \mathring{\sigma}_n^m 
    + b_a^n\, \mathring{\sigma}_n^m{}_{|m}
    + \mathring{\sigma}_a^n{}_{|m}\, b_n^m
    + \mathring{\sigma}_a^n\, b_n^m{}_{|m}
   \Big) \\
&\quad
+ \eta_{,m}\, \big(c_a^n\, \mathring{\sigma}_n^m + \mathring{\sigma}_a^n\, c_n^m\big)
+ \eta\Big(
      c_a^n{}_{|m}\, \mathring{\sigma}_n^m 
    + c_a^n\, \mathring{\sigma}_n^m{}_{|m}
    + \mathring{\sigma}_a^n{}_{|m}\, c_n^m
    + \mathring{\sigma}_a^n\, c_n^m{}_{|m}
   \Big) \\
&\quad
+ \kappa_{,m}\, (b_a^n\, \mathring{\sigma}_n^d\, b_d^m)
+ \kappa\Big(
      b_a^n{}_{|m}\, \mathring{\sigma}_n^d\, b_d^m
    + b_a^n\, \mathring{\sigma}_n^d{}_{|m}\, b_d^m
    + b_a^n\, \mathring{\sigma}_n^d\, b_d^m{}_{|m}
   \Big) \\
&\quad
+ \chi_{|mb}\,\Big(b_a^n\,\mathring{\sigma}^{2m}_n+\mathring{\sigma}^{2n}_a\,b_n^m\Big) 
+ \chi_{,m}\,\Big(b_{a|b}^n\,\mathring{\sigma}^{2m}_n
+b_a^n\,\mathring{\sigma}^{2m}_{~n|b}
+\mathring{\sigma}^{2n}_{~a|b}\,b_n^m
+\mathring{\sigma}^{2n}_a\,b_n^m{}_{|b}\Big) \\
&\quad
+ \chi_{,b}\,\Big(b_{a|m}^n\,\mathring{\sigma}^{2m}_n
+b_a^n\,\mathring{\sigma}^{2m}_{~n|m}
+\mathring{\sigma}^{2n}_{~a|m}\,b_n^m
+\mathring{\sigma}^{2n}_a\,b_n^m{}_{|m}\Big) \\
&\quad
+ \chi\,\Big(b_{a|mb}^n\,\mathring{\sigma}^{2m}_n
+b_{a|m}^n\,\mathring{\sigma}^{2m}_{~n|b}
+b_{a|b}^n\,\mathring{\sigma}^{2m}_{~n|m}
+b_a^n\,\mathring{\sigma}^{2m}_{~n|mb} 
+ \mathring{\sigma}^{2n}_{~a|mb}\,b_n^m
+ \mathring{\sigma}^{2n}_{~a|m}\,b_n^m{}_{|b}
+ \mathring{\sigma}^{2n}_{~a|b}\,b_n^m{}_{|m}
+ \mathring{\sigma}^{2n}_a\,b_n^m{}_{|mb}\Big)\,.
\end{aligned}
\end{equation}
%-----------------------------

The integrability conditions for the existence of $p$ are $p_{,ab} = p_{,ba}$, equivalently $p_{|ab} = p_{|ba}$, which give the universality constraints.
Expanding each term in $p_{|ab}=(p_{,a})_{|b}$, one obtains
%-----------------------------
\begin{equation}
\begin{aligned}
p_{|ab} &= 
   \beta_{|mb}\, b_a^m 
 + \beta_{,m}\, b_a^m{}_{|b}
 + \beta_{,b}\, b_a^m{}_{|m} 
 + \beta\, b_a^m{}_{|mb} \\[4pt]
&\quad
 + \gamma_{|mb}\, c_a^m
 + \gamma_{,m}\, c_a^m{}_{|b}
 + \gamma_{,b}\, c_a^m{}_{|m}
 + \gamma\, c_a^m{}_{|mb} \\[4pt]
&\quad
 + \zeta_{|mb}\, \mathring{\sigma}_a^m
 + \zeta_{,m}\, \mathring{\sigma}_a^m{}_{|b}
 + \zeta_{,b}\, \mathring{\sigma}_a^m{}_{|m}
 + \zeta\, \mathring{\sigma}_a^m{}_{|mb} \\
 &\quad
 + \omega_{|mb} \left(\mathring{\sigma}_a^n\, \mathring{\sigma}_n^m \right) \\
&\quad
 + \omega_{,m} \left(\mathring{\sigma}_a^n{}_{|b}\, \mathring{\sigma}_n^m
                   + \mathring{\sigma}_a^n\, \mathring{\sigma}_n^m{}_{|b}\right) \\
&\quad
 + \omega_{,b} \left(\mathring{\sigma}_a^n{}_{|m}\, \mathring{\sigma}_n^m
                   + \mathring{\sigma}_a^n\, \mathring{\sigma}_n^m{}_{|m} \right) \\
&\quad
 + \omega \left(\mathring{\sigma}_a^n{}_{|mb}\, \mathring{\sigma}_n^m
              + \mathring{\sigma}_a^n{}_{|m}\, \mathring{\sigma}_n^m{}_{|b}
              + \mathring{\sigma}_a^n{}_{|b}\, \mathring{\sigma}_n^m{}_{|m}
              + \mathring{\sigma}_a^n\, \mathring{\sigma}_n^m{}_{|mb} \right)\,,\\
&\quad
 + \xi_{|mb}\,\left(b_a^n\, \mathring{\sigma}_n^m + \mathring{\sigma}_a^n\, b_n^m\right) \\
&\quad
 + \xi_{,m}\,\left(b_a^n{}_{|b}\, \mathring{\sigma}_n^m 
                + b_a^n\, \mathring{\sigma}_n^m{}_{|b} 
                + \mathring{\sigma}_a^n{}_{|b}\, b_n^m 
                + \mathring{\sigma}_a^n\, b_n^m{}_{|b}\right) \\
&\quad
 + \xi_{,b}\,\left(b_a^n{}_{|m}\, \mathring{\sigma}_n^m 
                 + b_a^n\, \mathring{\sigma}_n^m{}_{|m} 
                 + \mathring{\sigma}_a^n{}_{|m}\, b_n^m 
                 + \mathring{\sigma}_a^n\, b_n^m{}_{|m}\right) \\
&\quad
 + \xi\,\Big(b_a^n{}_{|mb}\, \mathring{\sigma}_n^m
           + b_a^n{}_{|m}\, \mathring{\sigma}_n^m{}_{|b}
           + b_a^n{}_{|b}\, \mathring{\sigma}_n^m{}_{|m}
           + b_a^n\, \mathring{\sigma}_n^m{}_{|mb} \\
&\qquad\quad
           + \mathring{\sigma}_a^n{}_{|mb}\, b_n^m
           + \mathring{\sigma}_a^n{}_{|m}\, b_n^m{}_{|b}
           + \mathring{\sigma}_a^n{}_{|b}\, b_n^m{}_{|m}
           + \mathring{\sigma}_a^n\, b_n^m{}_{|mb}\Big) \\
&\quad
 + \eta_{|mb}\,\left(c_a^n\, \mathring{\sigma}_n^m + \mathring{\sigma}_a^n\, c_n^m\right) \\
&\quad
 + \eta_{,m}\,\left(c_a^n{}_{|b}\, \mathring{\sigma}_n^m 
                 + c_a^n\, \mathring{\sigma}_n^m{}_{|b} 
                 + \mathring{\sigma}_a^n{}_{|b}\, c_n^m 
                 + \mathring{\sigma}_a^n\, c_n^m{}_{|b}\right) \\
&\quad
 + \eta_{,b}\,\left(c_a^n{}_{|m}\, \mathring{\sigma}_n^m 
                 + c_a^n\, \mathring{\sigma}_n^m{}_{|m} 
                 + \mathring{\sigma}_a^n{}_{|m}\, c_n^m 
                 + \mathring{\sigma}_a^n\, c_n^m{}_{|m}\right) \\
&\quad
 + \eta\,\Big(c_a^n{}_{|mb}\, \mathring{\sigma}_n^m
            + c_a^n{}_{|m}\, \mathring{\sigma}_n^m{}_{|b}
            + c_a^n{}_{|b}\, \mathring{\sigma}_n^m{}_{|m}
            + c_a^n\, \mathring{\sigma}_n^m{}_{|mb} \\
&\qquad\quad
            + \mathring{\sigma}_a^n{}_{|mb}\, c_n^m
            + \mathring{\sigma}_a^n{}_{|m}\, c_n^m{}_{|b}
            + \mathring{\sigma}_a^n{}_{|b}\, c_n^m{}_{|m}
            + \mathring{\sigma}_a^n\, c_n^m{}_{|mb}\Big) \\
&\quad
+ \chi_{|mb}\,\Big(b_a^n\,\mathring{\sigma}^{2m}_n+\mathring{\sigma}^{2n}_a\,b_n^m\Big) \\
&\quad
+ \chi_{,m}\,\Big(b_{a|b}^n\,\mathring{\sigma}^{2m}_n
+b_a^n\,\mathring{\sigma}^{2m}_{~n|b}
+\mathring{\sigma}^{2n}_{~a|b}\,b_n^m
+\mathring{\sigma}^{2n}_a\,b_n^m{}_{|b}\Big) \\
&\quad
+ \chi_{,b}\,\Big(b_{a|m}^n\,\mathring{\sigma}^{2m}_n
+b_a^n\,\mathring{\sigma}^{2m}_{~n|m}
+\mathring{\sigma}^{2n}_{~a|m}\,b_n^m
+\mathring{\sigma}^{2n}_a\,b_n^m{}_{|m}\Big) \\
&\quad
+ \chi\,\Big(b_{a|mb}^n\,\mathring{\sigma}^{2m}_n
+b_{a|m}^n\,\mathring{\sigma}^{2m}_{~n|b}
+b_{a|b}^n\,\mathring{\sigma}^{2m}_{~n|m}
+b_a^n\,\mathring{\sigma}^{2m}_{~n|mb} \\
&\qquad\qquad\ \ 
+ \mathring{\sigma}^{2n}_{~a|mb}\,b_n^m
+ \mathring{\sigma}^{2n}_{~a|m}\,b_n^m{}_{|b}
+ \mathring{\sigma}^{2n}_{~a|b}\,b_n^m{}_{|m}
+ \mathring{\sigma}^{2n}_a\,b_n^m{}_{|mb}\Big)\,.
\end{aligned}
\end{equation}
%-----------------------------
Knowing that $\beta=\beta(I_1,I_2,I_4,I_5,I_6,I_7,I_8,I_9,I_{10})$, with the shorthand notation $\beta_j = \partial \beta / \partial I_j$, we have
%-----------------------------
\begin{equation}
	\beta_{,m} =
	  \beta_{1}\, I_{1,m}+ \beta_{2}\, I_{2,m}	+ \beta_{4}\, I_{4,m}+ \cdots+ \beta_{10}\, I_{10,m}
	=\sum_j \beta_{j}\, I_{j,m}\,,
\end{equation}
%-----------------------------
where $j\in\{1,2,4,5,6,7,8,9,10\}$.
Thus, the second covariant derivative is written as
%-----------------------------
\begin{equation}
\beta_{|mb} 
= \sum_{j} \beta_j\, I_{j|mb}
+ \sum_{j,k} \beta_{jk}\, I_{j,m}\, I_{k,b}\,,
\end{equation}
%-----------------------------
where $\beta_j = \partial\beta/\partial I_j$, $\beta_{jk} = \partial^2\beta / (\partial I_j \partial I_k)$, and  $j,k\in\{1,2,4,5,6,7,8,9,10\}$.

The contributions of the seven response functions are written as
%-----------------------------
\begin{equation}
\begin{aligned}
	p_{|ab}^{(\beta)} =\beta\, b_a^m{}_{|mb} 
	+ \sum_j \beta_j  \Big( I_{j|mb}\, b_a^m + I_{j,m}\, b_a^m{}_{|b} + I_{j,b}\, b_a^m{}_{|m} \Big) 
	+ \sum_{j,k} \beta_{jk}\,   I_{j,m}\, I_{k,b}\, b_a^m \,,
\end{aligned}
\end{equation}
%-----------------------------
%-----------------------------
\begin{equation}
\begin{aligned}
	p_{|ab}^{(\gamma)} &=
	\gamma\, c_a^m{}_{|mb} 
	+ \sum_j \gamma_j  \Big( I_{j|mb}\, c_a^m + I_{j,m}\, c_a^m{}_{|b} + I_{j,b}\, c_a^m{}_{|m} \Big) 
	+ \sum_{j,k} \gamma_{jk}\,  I_{j,m}\, I_{k,b}\, c_a^m \,,
\end{aligned}
\end{equation}
%-----------------------------
%-----------------------------
\begin{equation}
	p_{|ab}^{(\zeta)} =\zeta\, \mathring{\sigma}_a^m{}_{|mb} 
	+ \sum_j \zeta_j  \Big( I_{j|mb}\, \mathring{\sigma}_a^m + I_{j,m}\, \mathring{\sigma}_a^m{}_{|b}      
	+ I_{j,b}\, \mathring{\sigma}_a^m{}_{|m} \Big) 
	+ \sum_{j,k} \zeta_{jk}\,   I_{j,m}\, I_{k,b}\, \mathring{\sigma}_a^m \,,
\end{equation}
%-----------------------------
%-----------------------------
\begin{equation}
\begin{aligned}
p_{|ab}^{(\omega)} &=
\omega \left(\mathring{\sigma}_a^n{}_{|mb}\, \mathring{\sigma}_n^m
           + \mathring{\sigma}_a^n{}_{|m}\, \mathring{\sigma}_n^m{}_{|b}
           + \mathring{\sigma}_a^n{}_{|b}\, \mathring{\sigma}_n^m{}_{|m}
           + \mathring{\sigma}_a^n\, \mathring{\sigma}_n^m{}_{|mb}\right) \\
&\quad
+ \sum_j \omega_j\,
 \Big[ I_{j|mb}\, \mathring{\sigma}_a^n\, \mathring{\sigma}_n^m 
   + I_{j,m}\,(\mathring{\sigma}_a^n{}_{|b}\, \mathring{\sigma}_n^m
             + \mathring{\sigma}_a^n\, \mathring{\sigma}_n^m{}_{|b}) 
   + I_{j,b}\,(\mathring{\sigma}_a^n{}_{|m}\, \mathring{\sigma}_n^m
             + \mathring{\sigma}_a^n\, \mathring{\sigma}_n^m{}_{|m}) \Big] \\
&\quad
+ \sum_{j,k} \omega_{jk}\, I_{j,m}\, I_{k,b}\,
   (\mathring{\sigma}_a^n\, \mathring{\sigma}_n^m)\,,
\end{aligned}
\end{equation}
%-----------------------------
and
%-----------------------------
\begin{equation}
\begin{aligned}
p_{|ab}^{(\xi)} &=
\xi \Big(b_a^n{}_{|mb}\, \mathring{\sigma}_n^m
        + b_a^n{}_{|m}\, \mathring{\sigma}_n^m{}_{|b}
        + b_a^n{}_{|b}\, \mathring{\sigma}_n^m{}_{|m}
        + b_a^n\, \mathring{\sigma}_n^m{}_{|mb} \\
&\qquad
        + \mathring{\sigma}_a^n{}_{|mb}\, b_n^m
        + \mathring{\sigma}_a^n{}_{|m}\, b_n^m{}_{|b}
        + \mathring{\sigma}_a^n{}_{|b}\, b_n^m{}_{|m}
        + \mathring{\sigma}_a^n\, b_n^m{}_{|mb}\Big) \\
&\quad
+ \sum_j \xi_j\,
 \Big( I_{j|mb}\,(b_a^n\, \mathring{\sigma}_n^m + \mathring{\sigma}_a^n\, b_n^m) \\
&\qquad
   + I_{j,m}\,(b_a^n{}_{|b}\, \mathring{\sigma}_n^m 
             + b_a^n\, \mathring{\sigma}_n^m{}_{|b} 
             + \mathring{\sigma}_a^n{}_{|b}\, b_n^m 
             + \mathring{\sigma}_a^n\, b_n^m{}_{|b}) \\
&\qquad
   + I_{j,b}\,\left(b_a^n{}_{|m}\, \mathring{\sigma}_n^m 
             + b_a^n\, \mathring{\sigma}_n^m{}_{|m} 
             + \mathring{\sigma}_a^n{}_{|m}\, b_n^m 
             + \mathring{\sigma}_a^n\, b_n^m{}_{|m} \right) \Big) \\
&\quad
+ \sum_{j,k} \xi_{jk}\, I_{j,m}\, I_{k,b}\,
   (b_a^n\, \mathring{\sigma}_n^m + \mathring{\sigma}_a^n\, b_n^m)\,,
\end{aligned}
\end{equation}
%-----------------------------
%-----------------------------
\begin{equation}
\begin{aligned}
p_{|ab}^{(\eta)} &=
\eta\;\big(c_a^n{}_{|mb}\, \mathring{\sigma}_n^m
         + c_a^n{}_{|m}\, \mathring{\sigma}_n^m{}_{|b}
         + c_a^n{}_{|b}\, \mathring{\sigma}_n^m{}_{|m}
         + c_a^n\, \mathring{\sigma}_n^m{}_{|mb} \\
&\qquad
         + \mathring{\sigma}_a^n{}_{|mb}\, c_n^m
         + \mathring{\sigma}_a^n{}_{|m}\, c_n^m{}_{|b}
         + \mathring{\sigma}_a^n{}_{|b}\, c_n^m{}_{|m}
         + \mathring{\sigma}_a^n\, c_n^m{}_{|mb}\big) \\
&\quad
+ \sum_j \eta_j\,
 \Big( I_{j|mb}\,(c_a^n\, \mathring{\sigma}_n^m + \mathring{\sigma}_a^n\, c_n^m) \\
&\qquad
   + I_{j,m}\,(c_a^n{}_{|b}\, \mathring{\sigma}_n^m 
             + c_a^n\, \mathring{\sigma}_n^m{}_{|b} 
             + \mathring{\sigma}_a^n{}_{|b}\, c_n^m 
             + \mathring{\sigma}_a^n\, c_n^m{}_{|b}) \\
&\qquad
   + I_{j,b}\,(c_a^n{}_{|m}\, \mathring{\sigma}_n^m 
             + c_a^n\, \mathring{\sigma}_n^m{}_{|m} 
             + \mathring{\sigma}_a^n{}_{|m}\, c_n^m 
             + \mathring{\sigma}_a^n\, c_n^m{}_{|m}) \Big) \\
&\quad
+ \sum_{j,k} \eta_{jk}\, I_{j,m}\, I_{k,b}\,
   (c_a^n\, \mathring{\sigma}_n^m + \mathring{\sigma}_a^n\, c_n^m)\,,
\end{aligned}
\end{equation}
%-----------------------------
%-----------------------------
\begin{equation}
\begin{aligned}
p_{|ab}^{(\chi)} &=
\chi \Big( b^n_{a|mb}\,\mathring{\sigma}^{2m}_n
        + b^n_{a|m}\,\mathring{\sigma}^{2m}_{~n|b}
        + b^n_{a|b}\,\mathring{\sigma}^{2m}_{~n|m}
        + b_a^n\,\mathring{\sigma}^{2m}_{~n|mb} \\
&\qquad
        + \mathring{\sigma}^{2n}_{~a|mb}\,b_n^m
        + \mathring{\sigma}^{2n}_{~a|m}\,b_n^m{}_{|b}
        + \mathring{\sigma}^{2n}_{~a|b}\,b_n^m{}_{|m}
        + \mathring{\sigma}^{2n}_a\,b_n^m{}_{|mb} \Big) \\
&\quad
+ \sum_j \chi_j \Big[ I_{j|bm}\,\big(b_a^n\,\mathring{\sigma}^{2m}_n + \mathring{\sigma}^{2n}_a\,b_n^m\big) \\
&\qquad
   + I_{j,m}\,\big(b^n_{a|b}\,\mathring{\sigma}^{2m}_n + b_a^n\,\mathring{\sigma}^{2m}_{~n|b}
                 + \mathring{\sigma}^{2n}_{~a|b}\,b_n^m + \mathring{\sigma}^{2n}_a\,b_n^m{}_{|b} \big) \\
&\qquad
   + I_{j,b}\,\big(b^n_{a|m}\,\mathring{\sigma}^{2m}_n + b_a^n\,\mathring{\sigma}^{2m}_{~n|m}
                 + \mathring{\sigma}^{2n}_{~a|m}\,b_n^m + \mathring{\sigma}^{2n}_a\,b_n^m{}_{|m} \big) \Big] \\
&\quad
+ \sum_{j,k} \chi_{jk}\, I_{j,b}\,I_{k,m}\,\big(b_a^n\,\mathring{\sigma}^{2m}_n + \mathring{\sigma}^{2n}_a\,b_n^m\big) \,.
\end{aligned}
\end{equation}
%-----------------------------
Collecting all contributions one obtains
%-----------------------------
\begin{equation}
\begin{aligned}
p_{|ab} &=
  \beta \,\mathcal{B}_{ab}
+ \gamma \,\mathcal{C}_{ab}
+ \zeta \,\mathcal{Z}_{ab}
+ \omega \,\mathcal{W}_{ab}
+ \xi \,\mathcal{X}_{ab}
+ \eta \,\mathcal{H}_{ab}
+ \chi \,\mathcal{K}_{ab}\\
&\quad
+ \sum_j \Big(
     \beta_j \,\mathcal{B}^{(j)}_{ab}
   + \gamma_j \,\mathcal{C}^{(j)}_{ab}
   + \zeta_j \,\mathcal{Z}^{(j)}_{ab}
   + \omega_j \,\mathcal{W}^{(j)}_{ab}
   + \xi_j \,\mathcal{X}^{(j)}_{ab}
   + \eta_j \,\mathcal{H}^{(j)}_{ab}
   + \chi_j \,\mathcal{K}^{(j)}_{ab}
  \Big) \\
&\quad
+ \sum_{j,k} \Big(
     \beta_{jk}\, \mathcal{B}^{(jk)}_{ab}
   + \gamma_{jk}\, \mathcal{C}^{(jk)}_{ab}
   + \zeta_{jk}\, \mathcal{Z}^{(jk)}_{ab}
   + \omega_{jk}\, \mathcal{W}^{(jk)}_{ab}
   + \xi_{jk}\, \mathcal{X}^{(jk)}_{ab}
   + \eta_{jk}\, \mathcal{H}^{(jk)}_{ab}
   + \chi_{jk}\, \mathcal{K}^{(jk)}_{ab}
  \Big)\,.
\end{aligned}
\end{equation}
%-----------------------------
For each scalar coefficient $f \in \{\beta,\gamma,\zeta,\omega,\xi,\eta,\chi\}$, 
the total contribution to $p_{|ab}$ is decomposed into three groups as shown in the following table.
%-----------------------------
\begin{center}
\vskip 0.1in
\renewcommand{\arraystretch}{1.4}
\begin{tabular}{|c|c|c|c|}
\hline
Response function $f$ & $f$--coefficient & $f_j$--coefficients & $f_{jk}$--coefficients \\
\hline
$\beta$ & $\mathcal{B}_{ab}$ & $\mathcal{B}^{(j)}_{ab}$ & $\mathcal{B}^{(jk)}_{ab}$ \\
\hline
$\gamma$ & $\mathcal{C}_{ab}$ & $\mathcal{C}^{(j)}_{ab}$ & $\mathcal{C}^{(jk)}_{ab}$ \\
\hline
$\zeta$ & $\mathcal{Z}_{ab}$ & $\mathcal{Z}^{(j)}_{ab}$ & $\mathcal{Z}^{(jk)}_{ab}$ \\
\hline
$\omega$ & $\mathcal{W}_{ab}$ & $\mathcal{W}^{(j)}_{ab}$ & $\mathcal{W}^{(jk)}_{ab}$ \\
\hline
$\xi$ & $\mathcal{X}_{ab}$ & $\mathcal{X}^{(j)}_{ab}$ & $\mathcal{X}^{(jk)}_{ab}$ \\
\hline
$\eta$ & $\mathcal{H}_{ab}$ & $\mathcal{H}^{(j)}_{ab}$ & $\mathcal{H}^{(jk)}_{ab}$ \\
\hline
$\chi$ & $\mathcal{K}_{ab}$ & $\mathcal{K}^{(j)}_{ab}$ & $\mathcal{K}^{(jk)}_{ab}$ \\
\hline
\end{tabular}
\end{center}
\vskip 0.1in
%-----------------------------
Because the scalar response functions $f\in\{\beta,\gamma,\zeta,\omega,\xi,\eta,\chi\}$ are arbitrary functions of the invariants, the fields $f$, $f_j$, and $f_{jk}$ appear independently in the expression for $p_{|ab}$. The integrability condition $p_{|ab}=p_{|ba}$ therefore must hold separately for the coefficients of $f$, of $f_j$, and of $f_{jk}$.
However, the second derivatives of the response functions satisfy the symmetry $f_{jk}=f_{kj}$. Consequently, the coefficients multiplying $f_{jk}$ are not independent, and only their symmetric combinations are relevant. Thus, the universality constraints associated with second derivatives of the response functions must be imposed on the symmetrized coefficients
$\accentset{s}{\mathcal{A}}^{(jk)}_{ab}\coloneqq\mathcal{A}^{(jk)}_{ab}+\mathcal{A}^{(kj)}_{ab}$, rather than on $\mathcal{A}^{(jk)}_{ab}$ alone.
Accordingly, symmetry in $(a,b)$ must be enforced separately on the coefficients of $f$, on the coefficients of $f_j$, and on the symmetrized coefficients associated with $f_{jk}$.
These coefficients are given explicitly below.
%-----------------------------
\begin{align}
	\label{Universality-C-1}
	\mathcal{B}_{ab} & =  b_a^m{}_{|mb}	\,, \\
	\label{Universality-C-2}
	\mathcal{B}^{(j)}_{ab} &= I_{j|mb}\, b_a^m + I_{j,m}\, b_a^m{}_{|b} 
	+ I_{j,b}\, b_a^m{}_{|m}  \,,\\
	\label{Universality-C-3}
	\accentset{s}{\mathcal{B}}^{(jk)}_{ab} & = I_{j,m}\, I_{k,b}\, b_a^m 
	+ I_{k,m}\, I_{j,b}\, b_a^m  \,,\\
	\label{Universality-C-4}
	\mathcal{C}_{ab} & = c_a^m{}_{|mb} 	\,, \\
	\label{Universality-C-5}
	\mathcal{C}^{(j)}_{ab} &	 = I_{j|mb}\, c_a^m + I_{j,m}\, c_a^m{}_{|b} 
	+ I_{j,b}\, c_a^m{}_{|m} \,,\\
	\label{Universality-C-6}
	\accentset{s}{\mathcal{C}}^{(jk)}_{ab} & =  I_{j,m}\, I_{k,b}\, c_a^m 
	+ I_{k,m}\, I_{j,b}\, c_a^m \,,\\
	\label{Universality-C-7}
	\mathcal{Z}_{ab} & = \mathring{\sigma}_a^m{}_{|mb}	\,, \\
	\mathcal{Z}^{(j)}_{ab} &	 = I_{j|mb}\, \mathring{\sigma}_a^m 
	+ I_{j,m}\, \mathring{\sigma}_a^m{}_{|b}      
	+ I_{j,b}\, \mathring{\sigma}_a^m{}_{|m} \,,\\
	\label{Universality-C-9}
	\accentset{s}{\mathcal{Z}}^{(jk)}_{ab} & = I_{j,m}\, I_{k,b}\, \mathring{\sigma}_a^m 
	+ I_{k,m}\, I_{j,b}\, \mathring{\sigma}_a^m \,,\\
	\mathcal{W}_{ab} & = \mathring{\sigma}_a^n{}_{|mb}\, \mathring{\sigma}_n^m
	+ \mathring{\sigma}_a^n{}_{|m}\, \mathring{\sigma}_n^m{}_{|b}
	+ \mathring{\sigma}_a^n{}_{|b}\, \mathring{\sigma}_n^m{}_{|m}
	+ \mathring{\sigma}_a^n\, \mathring{\sigma}_n^m{}_{|mb}\,, \\
	\mathcal{W}^{(j)}_{ab} & = I_{j|mb}\, \mathring{\sigma}_a^n\, \mathring{\sigma}_n^m 
	+ I_{j,m}\,(\mathring{\sigma}_a^n{}_{|b}\, \mathring{\sigma}_n^m
             + \mathring{\sigma}_a^n\, \mathring{\sigma}_n^m{}_{|b}) 
	+ I_{j,b}\,(\mathring{\sigma}_a^n{}_{|m}\, \mathring{\sigma}_n^m
             + \mathring{\sigma}_a^n\, \mathring{\sigma}_n^m{}_{|m})  \,,\\
	\label{Universality-C-10}
	\accentset{s}{\mathcal{W}}^{(jk)}_{ab} & 
	= I_{j,m}\, I_{k,b}\,\mathring{\sigma}_a^n\, \mathring{\sigma}_n^m
	+I_{k,m}\, I_{j,b}\,\mathring{\sigma}_a^n\, \mathring{\sigma}_n^m\,, \\
	\mathcal{X}_{ab} & = b_a^n{}_{|mb}\, \mathring{\sigma}_n^m
	+ b_a^n{}_{|m}\, \mathring{\sigma}_n^m{}_{|b}
	+ b_a^n{}_{|b}\, \mathring{\sigma}_n^m{}_{|m}
	+ b_a^n\, \mathring{\sigma}_n^m{}_{|mb} 
	+ \mathring{\sigma}_a^n{}_{|mb}\, b_n^m
	+ \mathring{\sigma}_a^n{}_{|m}\, b_n^m{}_{|b} \nonumber \\
	& \quad
	+ \mathring{\sigma}_a^n{}_{|b}\, b_n^m{}_{|m}
	+ \mathring{\sigma}_a^n\, b_n^m{}_{|mb}	\,, \\
	\mathcal{X}^{(j)}_{ab} &	 = I_{j|mb}\,(b_a^n\, \mathring{\sigma}_n^m 
	+ \mathring{\sigma}_a^n\, b_n^m) 
	+ I_{j,m}\,(b_a^n{}_{|b}\, \mathring{\sigma}_n^m 
             + b_a^n\, \mathring{\sigma}_n^m{}_{|b} 
             + \mathring{\sigma}_a^n{}_{|b}\, b_n^m 
             + \mathring{\sigma}_a^n\, b_n^m{}_{|b}) \nonumber\\
	& \quad + I_{j,b}\,(b_a^n{}_{|m}\, \mathring{\sigma}_n^m 
             + b_a^n\, \mathring{\sigma}_n^m{}_{|m} 
             + \mathring{\sigma}_a^n{}_{|m}\, b_n^m 
             + \mathring{\sigma}_a^n\, b_n^m{}_{|m})\,,\\
	\label{Universality-C-12}
	\accentset{s}{\mathcal{X}}^{(jk)}_{ab} & 
	= I_{j,m}\, I_{k,b} \left(b_a^n\, \mathring{\sigma}_n^m 
	+ \mathring{\sigma}_a^n\, b_n^m \right)
	+I_{k,m}\, I_{j,b} \left(b_a^n\, \mathring{\sigma}_n^m 
	+ \mathring{\sigma}_a^n\, b_n^m \right) \,,\\
	\mathcal{H}_{ab} & = c_a^n{}_{|mb}\, \mathring{\sigma}_n^m
	+ c_a^n{}_{|m}\, \mathring{\sigma}_n^m{}_{|b}
	+ c_a^n{}_{|b}\, \mathring{\sigma}_n^m{}_{|m}
	+ c_a^n\, \mathring{\sigma}_n^m{}_{|mb} 
	+ \mathring{\sigma}_a^n{}_{|mb}\, c_n^m
	+ \mathring{\sigma}_a^n{}_{|m}\, c_n^m{}_{|b} \nonumber \\
	& \quad
	+ \mathring{\sigma}_a^n{}_{|b}\, c_n^m{}_{|m}
	+ \mathring{\sigma}_a^n\, c_n^m{}_{|mb}	\,, \\
	\mathcal{H}^{(j)}_{ab} &	 = I_{j|mb} \left(c_a^n\, \mathring{\sigma}_n^m 
	+ \mathring{\sigma}_a^n\, c_n^m \right) 
	+ I_{j,m} \left(c_a^n{}_{|b}\, \mathring{\sigma}_n^m 
             + c_a^n\, \mathring{\sigma}_n^m{}_{|b} 
             + \mathring{\sigma}_a^n{}_{|b}\, c_n^m 
             + \mathring{\sigma}_a^n\, c_n^m{}_{|b} \right) \nonumber\\
	& \quad + I_{j,b} \left(c_a^n{}_{|m}\, \mathring{\sigma}_n^m 
             + c_a^n\, \mathring{\sigma}_n^m{}_{|m} 
             + \mathring{\sigma}_a^n{}_{|m}\, c_n^m 
             + \mathring{\sigma}_a^n\, c_n^m{}_{|m} \right) \,,\\
	\accentset{s}{\mathcal{H}}^{(jk)}_{ab} & 
	= I_{j,m}\, I_{k,b} \left(c_a^n\, \mathring{\sigma}_n^m 
	+ \mathring{\sigma}_a^n\, c_n^m \right)
	+I_{k,m}\, I_{j,b} \left(c_a^n\, \mathring{\sigma}_n^m 
	+ \mathring{\sigma}_a^n\, c_n^m \right)\,,\\
	\mathcal{K}_{ab} & = b^n_{a|mb}\,\mathring{\sigma}^{2m}_n
        + b^n_{a|m}\,\mathring{\sigma}^{2m}_{~n|b}
        + b^n_{a|b}\,\mathring{\sigma}^{2m}_{~n|m}
        + b_a^n\,\mathring{\sigma}^{2m}_{~n|mb} \nonumber\\
	&\quad
        + \mathring{\sigma}^{2n}_{~a|mb}\,b_n^m
        + \mathring{\sigma}^{2n}_{~a|m}\,b_n^m{}_{|b}
        + \mathring{\sigma}^{2n}_{~a|b}\,b_n^m{}_{|m}
        + \mathring{\sigma}^{2n}_a\,b_n^m{}_{|mb} \,, \\
	\mathcal{K}^{(j)}_{ab} & = I_{j|bm}\,\big(b_a^n\,\mathring{\sigma}^{2m}_n 
	+ \mathring{\sigma}^{2n}_a\,b_n^m\big) \nonumber\\
	&\quad
	+ I_{j,m}\,\big(b^n_{a|b}\,\mathring{\sigma}^{2m}_n 
	+ b_a^n\,\mathring{\sigma}^{2m}_{~n|b}
                 + \mathring{\sigma}^{2n}_{~a|b}\,b_n^m 
                 + \mathring{\sigma}^{2n}_a\,b_n^m{}_{|b} \big) \nonumber\\
	&\quad
	+ I_{j,b}\,\big(b^n_{a|m}\,\mathring{\sigma}^{2m}_n 
	+ b_a^n\,\mathring{\sigma}^{2m}_{~n|m}
                 + \mathring{\sigma}^{2n}_{~a|m}\,b_n^m 
                 + \mathring{\sigma}^{2n}_a\,b_n^m{}_{|m} \big)  \,,\\
	\accentset{s}{\mathcal{K}}^{(jk)}_{ab} & = 
	I_{j,b}\,I_{k,m}\,\big(b_a^n\,\mathring{\sigma}^{2m}_n 
	+ \mathring{\sigma}^{2n}_a\,b_n^m\big)
	+I_{k,b}\,I_{j,m}\,\big(b_a^n\,\mathring{\sigma}^{2m}_n 
	+ \mathring{\sigma}^{2n}_a\,b_n^m\big)  \,.
\end{align}
%-----------------------------
These terms must be symmetric in $(ab)$ for $j=1,2,4,\cdots, 10$ ($9$ terms) and $j\leq k =1,2,4,\cdots, 10$.
The $\frac{9\times (9+1)}{2}=45$ independent $(j,k)$ pairs are
%-----------------------------
\begin{equation}
\begin{aligned}
&(1,1), (1,2), (1,4), (1,5), (1,6), (1,7), (1,8), (1,9), (1,10), \\
&(2,2), (2,4), (2,5), (2,6), (2,7), (2,8), (2,9), (2,10), \\
&(4,4), (4,5), (4,6), (4,7), (4,8), (4,9), (4,10), \\
&(5,5), (5,6), (5,7), (5,8), (5,9), (5,10), \\
&(6,6), (6,7), (6,8), (6,9), (6,10), \\
&(7,7), (7,8), (7,9), (7,10), \\
&(8,8), (8,9), (8,10), \\
&(9,9), (9,10), \\
&(10,10)\,.
\end{aligned}
\end{equation}
%-----------------------------

For $j=1,2$, from the universality constraints \eqref{Universality-C-1}-\eqref{Universality-C-6}, one recovers the twelve constraints that were derived in \citep{Yavari2024Cauchy} and were shown to be equivalent to Ericksen's nine universality constraints. It is known that other than isochoric homogeneous deformations, these constraints admit five families of universal deformations. For each known family of universal deformations, the residual stress distributions that together with the universal deformations satisfy the remaining universality constraints are called \textit{universal residual stresses}. 

%-----------------------------
\begin{prop}\label{prop:universal-deformations-residual-stress}
The presence of residual stress does not enlarge the space of universal deformations. In particular, every universal deformation of a residually-stressed elastic body is a universal deformation of the corresponding residual-stress-free case.
\end{prop}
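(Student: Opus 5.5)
The plan is to exploit the fact that the residually-stressed constitutive class contains the residual-stress-free class as a special case. Suppose $\varphi$ is universal for the class of incompressible isotropic Cauchy elastic solids with residual stress $\mathring{\boldsymbol{\sigma}}$. By definition, $\varphi$ can then be maintained by boundary tractions alone for \emph{every} admissible choice of the response functions $\beta,\gamma,\zeta,\omega,\xi,\eta,\chi$ in \eqref{Cauchy-Stress-Representation-Incompressible}. Equivalently, the integrability condition $p_{|ab}=p_{|ba}$ holds identically in these functions. Since the functions are arbitrary, the coefficients of $f$, $f_j$ and the symmetrized $f_{jk}$ must each be symmetric in $(a,b)$, as argued above.

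\textbf{Step 1: specialize the response functions.} Choose $\zeta=\omega=\xi=\eta=\chi=0$, and let $\beta,\gamma$ depend only on $(I_1,I_2)$, with $\beta_j=\gamma_j=0$ for $j\ge4$. These choices are admissible members of the class, because the representation allows arbitrary smooth response functions of the ten invariants. The stress then reduces to $-p\,\mathbf{g}^\sharp+\beta\,\mathbf{b}^\sharp+\gamma\,\mathbf{c}^\sharp$. This is exactly the general incompressible isotropic Cauchy elastic solid without residual stress.

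\textbf{Step 2: read off the surviving constraints.} With this choice, the surviving universality constraints are precisely \eqref{Universality-C-1}--\eqref{Universality-C-6} with $j,k\in\{1,2\}$. These are the twelve constraints of \citep{Yavari2024Cauchy}, which are known to be equivalent to Ericksen's nine constraints for incompressible isotropic (hyper)elasticity.

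\textbf{Step 3: conclude.} The full set of universality constraints for the residually-stressed body is therefore a superset of the residual-stress-free constraints. Hence any deformation satisfying all of them, for some universal residual stress, satisfies the classical ones. It is thus a universal deformation in the residual-stress-free case, and the space of universal deformations cannot be enlarged.

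\textbf{Main obstacle.} The delicate point is the legitimacy of Step 1. Two issues must be checked.

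\emph{Restriction to a subclass.} We need that ``universal for the residually-stressed class'' means universal for all response functions, including ones independent of the residual-stress invariants. I will argue this from the arbitrariness of the response functions assumed in deriving the coefficient-wise constraints.

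\emph{Material dependence of the residual stress.} The residual stress $\mathring{\boldsymbol{\sigma}}$ is a given field; it is not tied to the response functions, since no specific origin is assumed. The field $\mathring{\boldsymbol{\sigma}}$ enters the $(\beta,\gamma)$ constraints only through invariants whose derivatives we have switched off. So the reduced constraints involve $\mathbf{b}$ alone.

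If one instead insisted that $\mathring{\boldsymbol{\sigma}}$ be held fixed while the $j=1,2$ coefficients are examined without specialization, the same conclusion still follows. The coefficient of $\beta$ is $\mathcal{B}_{ab}$, and the coefficients of $\beta_1,\beta_2,\gamma_1,\gamma_2,\beta_{11},\ldots$ are independent of $\mathring{\boldsymbol{\sigma}}$. Their symmetry is already required. This gives a second, purely coefficient-wise route to the same result.
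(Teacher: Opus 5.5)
Your proposal is correct. The coefficient-wise route in your last paragraph is exactly the paper's proof: the paper simply observes that \eqref{Universality-C-1}--\eqref{Universality-C-6} for $j,k\in\{1,2\}$ contain no $\mathring{\boldsymbol{\sigma}}$ and coincide with the twelve constraints of \citep{Yavari2024Cauchy}.

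Your primary route (Steps 1--3) is a subclass argument instead. Once the choice $\zeta=\omega=\xi=\eta=\chi=0$, with $\beta,\gamma$ depending only on $(I_1,I_2)$, is admissible, universality over the whole class gives universality for every residual-stress-free material directly from the definition. Step 2 is therefore needed only to connect with Ericksen's characterization, not for the inclusion itself.

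What this route costs is the admissibility you flagged. That is fine here, since the paper takes all response functions to be arbitrary and imposes no condition tying the response at $\mathbf{b}=\mathbf{g}$ to $\mathring{\boldsymbol{\sigma}}$. Suppose, however, one imposed the natural requirement that the constitutive law reproduce $\mathring{\boldsymbol{\sigma}}$ (up to the pressure) at $\mathbf{F}=\mathbf{I}$. Then your specialized materials would be excluded whenever $\mathring{\boldsymbol{\sigma}}$ is not hydrostatic. The paper's pointwise argument would survive, because that requirement leaves $\beta$ and $\gamma$ free.

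Your route can be repaired in that setting too. Fix admissible $\zeta,\omega,\xi,\eta,\chi$, compare the materials with $(\beta,\gamma)$ and with $(0,0)$, and subtract the two equilibrium equations. Since $\boldsymbol{\sigma}$ is linear in the response functions, $-\tilde{p}\,\mathbf{g}^\sharp+\beta\,\mathbf{b}^\sharp+\gamma\,\mathbf{c}^\sharp$ is then divergence-free, where $\tilde{p}$ is the difference of the two pressure fields.
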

\begin{proof}
The universality constraints \eqref{Universality-C-1}-\eqref{Universality-C-6} for $j=1,2$ are identical to the universality constraints of the residual-stress-free case \citep{Yavari2024Cauchy} and are independent of the residual stress. Therefore, any universal deformation must satisfy exactly the same universality constraints as in the classical theory. Consequently, universal deformations beyond those of the residual-stress-free case do not exist.
For a given universal deformation, one must then determine whether there exist non-trivial residual stress fields that satisfy the remaining universality constraints. If such residual stress fields exist, the corresponding deformation remains universal in the presence of residual stress. 
This will be shown for the six known families of universal deformations. Whether there exist additional universal deformations beyond these known families remains an open problem.
\end{proof}
%-----------------------------

From \eqref{Universality-C-1} and \eqref{Universality-C-7}, symmetry of $b_a^m{}_{|mb}$ and $\mathring{\sigma}_a^m{}_{|mb}$ tells us that there exist scalar functions $\phi$ and $\psi$ such that $b_a^m{}_{|m}=\phi_{,a}$ and $\mathring{\sigma}_a^m{}_{|m}=\psi_{,a}$, i.e., $\operatorname{div}\mathbf{b}^\sharp=\nabla \phi$ and $\operatorname{div}\mathring{\boldsymbol{\sigma}}=\nabla \psi$.

%-----------------------------
\begin{remark}
\citet{Ericksen1954} used the following result in his analysis that we briefly review here. Suppose $\mathbf{u}$ and $\mathbf{v}$ are vectors such that $\mathbf{u}\otimes\mathbf{v}=\mathbf{v}\otimes\mathbf{u}$. Assuming that $\mathbf{v}\neq\mathbf{0}$ (if $\mathbf{v}=\mathbf{0}$, this equality trivially holds) one can write
%-----------------------------
\begin{equation} \label{u-v-identity}
	\mathbf{u}=\frac{\mathbf{u}\cdot\mathbf{v}}{|\mathbf{v}|^2}\mathbf{v}=\lambda\mathbf{v}\,,
\end{equation}
%-----------------------------
i.e., $\mathbf{u}$ and $\mathbf{v}$ must be parallel.
Now suppose that $\mathbf{u}^{\flat}=d\phi$ and $\mathbf{v}^{\flat}=d \psi$, where $\flat$ is the flat operator that gives the $1$-form corresponding to a vector, $d$ is the exterior derivative, and $\phi$ and $\psi$ are scalar fields. Note that $\lambda=\lambda(\mathbf{u},\mathbf{v})=\lambda(\phi,\psi)$, in general. Thus, $\mathbf{u}^{\flat}=d\phi=\lambda(\phi,\psi)\, d\psi$.
On any region where $d\psi\neq 0$, this implies that the gradients of $\phi$ and $\psi$ are everywhere parallel. Equivalently, $\phi$ is constant along the level sets of $\psi$, and hence $\phi$ and $\psi$ are locally functionally dependent. Therefore, there exists a scalar function $\Phi$ such that $\phi=\Phi(\psi)$ locally, and consequently
%-----------------------------
\begin{equation}
	d\phi = \Phi'(\psi)\, d\psi\,.
\end{equation}
%-----------------------------
Comparing with $d\phi=\lambda(\phi,\psi)\, d\psi$, we conclude that $\lambda=\Phi'(\psi)$, i.e., $\lambda$ depends only on $\psi$. This implies that in \eqref{u-v-identity}, $\lambda=\lambda(\mathbf{v})$.
\end{remark}
%-----------------------------

Symmetry of $\accentset{s}{\mathcal{B}}^{(jk)}_{ab}$ in \eqref{Universality-C-3} implies that 
%-----------------------------
\begin{equation} \label{Bs-symmetry-constraints}
	(b_a^m\,I_{j,m})\, I_{k,b} + (b_a^m\,I_{k,m})\, I_{j,b}  
	= (b_b^m\,I_{j,m})\, I_{k,a} + (b_b^m\,I_{k,m})\, I_{j,a}\,.
\end{equation}
%-----------------------------
In particular, for $k=j$, one has
%-----------------------------
\begin{equation}
	(b_a^m\,I_{j,m})\, I_{j,b}  = (b_b^m\,I_{j,m})\, I_{j,a} \,,
\end{equation}
%-----------------------------
and hence
%-----------------------------
\begin{equation} \label{b-eigenvalues}
	b_a^m\,I_{j,m} = \xi_{j}\,I_{j,a}\quad (\text{no~summation~on~}j)\,.
\end{equation}
%-----------------------------
Therefore, if $I_j$ is not constant, $\nabla I_j$ is an eigenvector of $\mathbf{b}$ (and obviously of $\mathbf{c}$).\footnote{Ericksen's universal solutions satisfy this property for $j=1, 2$. In the presence of residual stress, we have the same property for $j=4,\cdots,10$ as well.}
Now, substituting \eqref{b-eigenvalues} into \eqref{Bs-symmetry-constraints} for $j\neq k$ one obtains
%-----------------------------
\begin{equation} 
	(\xi_j - \xi_k) I_{j,a}\,I_{k,b} = (\xi_j - \xi_k) I_{j,b}\,I_{k,a}\,.
\end{equation}
%-----------------------------
If $\xi_j \neq \xi_k$, this implies that $\nabla I_j$ and $\nabla I_j$ are parallel, i.e., $I_{k,a} = \xi_{kj}\,I_{j,a}$ (no summation on $j$). If $\xi_j = \xi_k$, $\nabla I_j$ and $\nabla I_j$ are still parallel (eigenvectors corresponding to the same eigenvalue).

We know that $\mathbf{b}$ and $\mathbf{c}$ have the same eigenvectors. Therefore, knowing that $\nabla I_i$ is an eigenvector of $\mathbf{b}$, the universality constraints corresponding to $\accentset{s}{\mathcal{C}}^{(jk)}_{ab}$ are trivially satisfied.

Symmetry of $\accentset{s}{\mathcal{Z}}^{(jk)}_{ab} $ in \eqref{Universality-C-9} implies that 
%-----------------------------
\begin{equation}
	(\mathring{\sigma}_a^m\,I_{j,m})\, I_{k,b} + (\mathring{\sigma}_a^m\,I_{k,m})\, I_{j,b}  
	= (\mathring{\sigma}_b^m\,I_{j,m})\, I_{k,a} + (\mathring{\sigma}_b^m\,I_{k,m})\, I_{j,a}\,.
\end{equation}
%-----------------------------
In particular, for $k=j$, one has
%-----------------------------
\begin{equation}
	(\mathring{\sigma}_a^m\,I_{j,m})\, I_{j,b}  = (\mathring{\sigma}_b^m\,I_{j,m})\, I_{j,a} \,,
\end{equation}
%-----------------------------
and hence
%-----------------------------
\begin{equation} \label{sigma0-eigenvalues}
	\mathring{\sigma}_a^m\,I_{j,m} = \eta_{j}\,I_{j,a}\quad (\text{no~summation~on~}j)\,.
\end{equation}
%-----------------------------
This implies that if $I_j$ is not constant, $\nabla I_j$ is an eigenvector of $\mathring{\boldsymbol{\sigma}}$, i.e.,
%-----------------------------
\begin{equation}
	\mathring{\sigma}_a^m\,I_{j,m} = \eta_{j}\,I_{j,a}\quad (\text{no~summation~on~}j)\,.
\end{equation}
%-----------------------------
For $k \neq j$, we have
%-----------------------------
\begin{equation}
	(\eta_j - \eta_k) I_{j,a}\,I_{k,b} = (\eta_j - \eta_k) I_{j,b}\,I_{k,a}\,.
\end{equation}
%-----------------------------
If $\eta_j = \eta_k$, this identity is satisfied trivially. If $\eta_j \neq \eta_k$, the factor $(\eta_j - \eta_k)$ can be canceled, and the relation reduces to
$ I_{j,a}\,I_{k,b} = I_{j,b}\,I_{k,a} $, which trivially holds because $\nabla I_j \parallel \nabla I_k$.

The terms $\accentset{s}{\mathcal{W}}^{(jk)}_{ab}$ in \eqref{Universality-C-10} are now simplified to read
%-----------------------------
\begin{equation}
	\accentset{s}{\mathcal{W}}^{(jk)}_{ab} 
	= 2\left( \eta_j^2\, I_{j,a}\, I_{k,b} + \eta_k^2\, I_{j,b}\, I_{k,a} \right)
	\,.
\end{equation}
%-----------------------------
The corresponding universality constraints are: $(\eta_j^2-\eta_k^2)\, I_{j,a}\, I_{k,b}=(\eta_j^2-\eta_k^2)\, I_{j,b}\, I_{k,a}$.
If $\eta_j = \eta_k$, the constraints are trivially satisfied.  
If $\eta_j \neq \eta_k$, the constraints are still satisfied since $\nabla I_j \parallel \nabla I_k$.

The terms $\accentset{s}{\mathcal{X}}^{(jk)}_{ab} $ in \eqref{Universality-C-12} are now simplified to read
%-----------------------------
\begin{equation}
\begin{aligned}
	\accentset{s}{\mathcal{X}}^{(jk)}_{ab} 
	 = I_{j,m}\, I_{k,b} \left(b_a^n\, \mathring{\sigma}_n^m 
	+ \mathring{\sigma}_a^n\, b_n^m \right)
	+I_{k,m}\, I_{j,b} \left(b_a^n\, \mathring{\sigma}_n^m 
	+ \mathring{\sigma}_a^n\, b_n^m \right) 
	 = 2\left( 	\eta_j\,\xi_j\, I_{j,a}\, I_{k,b} + \eta_k\,\xi_k\, I_{j,b}\, I_{k,a} \right)
	\,.
\end{aligned}
\end{equation}
%-----------------------------
The corresponding universality constraints are: $(\eta_j\,\xi_j-\eta_k\,\xi_k)\, I_{j,b}\, I_{k,a}=(\eta_j\,\xi_j-\eta_k\,\xi_k)\, I_{j,a}\, I_{k,b}$.
If $\eta_j\,\xi_j = \eta_k\,\xi_k$, the constraints are trivially satisfied.  
If $\eta_j\,\xi_j \neq \eta_k\,\xi_k$, the constraints are still trivially satisfied since $\nabla I_j \parallel \nabla I_k$.

Next we simplify the terms $\accentset{s}{\mathcal{H}}^{(jk)}_{ab}$ as
%-----------------------------
\begin{equation}
	\accentset{s}{\mathcal{H}}^{(jk)}_{ab} 
	= 2\left( \frac{\eta_j}{\xi_j} \,I_{j,a}\, I_{k,b} + \frac{\eta_k}{\xi_k}\, I_{j,b}\, I_{k,a} \right)
	\,.
\end{equation}
%-----------------------------
The corresponding universality constraints read
%-----------------------------
\begin{equation}
	\left(\frac{\eta_j}{\xi_j}-\frac{\eta_k}{\xi_k}\right)\, I_{j,a}\, I_{k,b}
	= \left(\frac{\eta_j}{\xi_j}-\frac{\eta_k}{\xi_k}\right)\, I_{j,b}\, I_{k,a}	
	\,.
\end{equation}
%-----------------------------
If $\eta_j\,\xi_j^{-1} = \eta_k\,\xi_k^{-1}$, the constraints are trivially satisfied.  
If $\eta_j\,\xi_j^{-1} \neq \eta_k\,\xi_k^{-1}$, the constraints are still satisfied since $\nabla I_j \parallel \nabla I_k$.

The terms $\accentset{s}{\mathcal{K}}^{(jk)}_{ab}$ is simplified as
%-----------------------------
\begin{equation}
	\accentset{s}{\mathcal{K}}^{(jk)}_{ab} 
	= 2\left( \xi_j\,\eta_j^2 \,I_{j,a}\, I_{k,b} + \xi_k\,\eta_k^2\, I_{j,b}\, I_{k,a} \right)
	\,.
\end{equation}
%-----------------------------
The corresponding universality constraints read: $\left(\xi_j\,\eta_j^2-\xi_k\,\eta_k^2\right)\, I_{j,a}\, I_{k,b}	= \left(\xi_j\,\eta_j^2-\xi_k\,\eta_k^2\right)\, I_{j,b}\, I_{k,a}$.
If $\xi_j\,\eta_j^2=\xi_k\,\eta_k^2$, the constraints are trivially satisfied.  
If $\xi_j\,\eta_j^2\neq \xi_k\,\eta_k^2$, the constraints are still satisfied since $\nabla I_j \parallel \nabla I_k$.

Recall that $\mathring{\boldsymbol{\sigma}}$ has principal invariants $I_4$, $I_5$, and $I_6$. If they are not constant, we have shown that $\nabla I_4$, $\nabla I_5$, and $\nabla I_6$ are eigenvectors of $\mathbf{b}$ and are parallel. In particular, $I_4$, $I_5$, and $I_6$ are functionally dependent.

%-----------------------------
%----------------------------
\subsection{Symmetry Reduction of the system of nonlinear universality PDEs}

\citet{Goodbrake2020} studied a closely related problem in the context of universal eigenstrains. In their formulation, the material manifold is endowed with a Riemannian metric $\mathbf{G}$ that encodes the eigenstrain distribution and is not known a priori. Two key assumptions were made in that work. First, the analysis was restricted to the six classical families of universal deformations known from incompressible isotropic elasticity. Second, it was assumed that the material metric $\mathbf{G}$ has the same symmetry group as the right Cauchy--Green tensor $\mathbf{C}^\flat$, i.e., the eigenstrain-induced geometry inherits the symmetries of the deformation.

In the present work, we adopt a fundamentally different point of view by working directly with residual stresses rather than eigenstrains or an underlying material metric. In particular, we do not assume \emph{a priori} that the deformation belongs to any of the known families of universal deformations. Instead, we have shown that for incompressible isotropic Cauchy elastic solids with residual stress, no additional universal deformations exist beyond those of the residual-stress-free case (Proposition~\ref{prop:universal-deformations-residual-stress}). This constitutes an important structural result: residual stress does not enlarge the class of universal deformations.

However, even for homogeneous deformations, solving the full system of nonlinear PDEs arising from the universality constraints is highly nontrivial. The complexity is significantly greater than in the classical setting due to the coupling between the deformation and the residual stress field. To render the problem tractable, we follow an approach analogous in spirit to that of \citet{Goodbrake2020}: for each family of universal deformations, we restrict attention to residual stress fields that share the same symmetry group as the corresponding deformation. In other words, the residual stress is assumed to be invariant under the symmetry group associated with the universal deformation under consideration, i.e., $\mathring{\mathbf{S}}$ has the same symmetry as $\mathbf{C}^\flat$.
This symmetry restriction is not imposed as a fundamental constitutive assumption, but rather as a pragmatic simplification that allows the universality constraints to be analyzed explicitly.

%----------------------
%----------------------
\subsection{Group actions and symmetry}

Symmetry will be used below to reduce the universality PDEs. Recall that the special Euclidean group is the semi-direct product $\mathrm{SE}(n)=\mathrm{SO}(n)\ltimes \mathrm{T}(n)$, where $\mathrm{T}(n)$ is the group of translations and $\mathrm{SO}(n)$ is the group of proper rotations. An element of $\mathrm{SE}(n)$ is written as $(\mathbf{Q},\mathbf{c})$, where $\mathbf{Q}\in\mathrm{SO}(n)$ and $\mathbf{c}\in\mathrm{T}(n)$. Its action on Euclidean space is given as $\mathbf{x}\mapsto \mathbf{Q}\mathbf{x}+\mathbf{c}$.
Therefore, the group product is defined as
%-----------------------------
\begin{equation}
	(\mathbf{Q}_2,\mathbf{c}_2)\star(\mathbf{Q}_1,\mathbf{c}_1)
	= (\mathbf{Q}_2\mathbf{Q}_1,\mathbf{Q}_2\mathbf{c}_1+\mathbf{c}_2)\,.
\end{equation}
%-----------------------------
More generally, an action of a group $\mathcal{G}$ on a manifold $\mathcal{M}$ is a map $\rho:\mathcal{G}\times\mathcal{M}\to\mathcal{M}$ such that, writing $\rho(g,\mathbf{x})=g\cdot\mathbf{x}$, one has
%-----------------------------
\begin{equation}
	g_2\cdot(g_1\cdot\mathbf{x})=(g_2 g_1)\cdot\mathbf{x}\,,\qquad
	e\cdot\mathbf{x}=\mathbf{x}\,,
\end{equation}
%-----------------------------
for all $g_1,g_2\in \mathcal{G}$ and all $\mathbf{x}\in\mathcal{M}$, where $e$ is the identity element of $\mathcal{G}$.

The action can be prolonged to tensor bundles. For a fixed element $g\in \mathcal{G}$, the map $\rho(g):\mathcal{M}\to\mathcal{M}$ is a diffeomorphism. Its tangent map defines the induced action on tangent vectors, while the pull-back by the inverse map defines the induced action on covectors. Thus, tensor fields are transformed by acting both on the base point and on the tensor components. This distinction is important because an action may fix a point while still acting nontrivially on the tangent space at that point. For example, a rotation about an axis fixes the points on the axis but rotates tangent vectors at those points.

For the classical universal deformations reviewed below, the right Cauchy--Green tensor $\mathbf{C}^\flat$ is invariant under the prolonged action of a Lie subgroup of $\mathrm{SE}(3)$ acting on the reference configuration. This invariance will be used in the symmetry reduction of the system of nonlinear universality PDEs.

%----------------------
%----------------------
\subsection{Prolonged action on $\mathbf{C}^\flat$}

Let a Lie group $\mathcal{G}$ act on the body manifold $\mathcal{B}$ via $\rho:\mathcal{G}\times\mathcal{B}\to\mathcal{B}$, $(g,X)\mapsto g\cdot X$. For each $g\in\mathcal{G}$, the map $\rho_g:\mathcal{B}\to\mathcal{B}$ is a diffeomorphism with tangent map
%-----------------------------
\begin{equation}
	T_X\rho_g:T_X\mathcal{B}\rightarrow T_{g\cdot X}\mathcal{B}\,.
\end{equation}
%-----------------------------
The action of $\mathcal{G}$ on points induces an action on tensor fields. For a $(0,2)$-tensor field $\mathbf{T}$, the values $\mathbf{T}_X$ and $\mathbf{T}_{g\cdot X}$ cannot be compared directly, since they are bilinear forms on different tangent spaces. 
The transformed tensor should again be a tensor field on $\mathcal{B}$ and hence must assign a tensor to each point $X\in\mathcal{B}$. Since the group action maps points according to $X\mapsto g\cdot X$, the value of the transformed tensor at $X$ is obtained from the value of the original tensor at the point $g^{-1}\cdot X$. The pull-back by $\rho_{g^{-1}}$ converts this tensor into a bilinear form on $T_X\mathcal{B}$.\footnote{The pull-back is used because $\mathbf{T}$ acts on vectors. Pulling back $\mathbf{T}$ allows it to act on vectors based at the original point. In contrast, vector fields are transformed by push-forward.}
This defines the prolonged action
%-----------------------------
\begin{equation}
	(g\cdot\mathbf{T})_X	= (\rho_{g^{-1}})^*\mathbf{T}\big|_X\,.
\end{equation}
%-----------------------------
Equivalently, for $\mathbf{U},\mathbf{V}\in T_X\mathcal{B}$,
%-----------------------------
\begin{equation}
	(g\cdot\mathbf{T})_X(\mathbf{U},\mathbf{V})
	=\mathbf{T}_{g^{-1}\cdot X}
	\left(T_X\rho_{g^{-1}}\,\mathbf{U},T_X\rho_{g^{-1}}\,\mathbf{V}\right)\,.
\end{equation}
%-----------------------------
The use of $g^{-1}$ is the convention that makes this a left action on tensor fields, namely
%-----------------------------
\begin{equation}
	g_2\cdot(g_1\cdot\mathbf{T}) =	(g_2g_1)\cdot\mathbf{T}\,.
\end{equation}
%-----------------------------
Applying this to the right Cauchy--Green tensor $\mathbf{C}^\flat$, one obtains
%-----------------------------
\begin{equation}
	(g\cdot\mathbf{C}^\flat)_X=(\rho_{g^{-1}})^*\mathbf{C}^\flat\big|_X\,.
\end{equation}
%-----------------------------
The tensor field $\mathbf{C}^\flat$ is invariant under the action of $\mathcal{G}$ if
%-----------------------------
\begin{equation}
	g\cdot\mathbf{C}^\flat = \mathbf{C}^\flat\,,
	\qquad	\forall\,g\in\mathcal{G}\,.
\end{equation}
%-----------------------------

%-----------------------------
%-----------------------------
\subsection{The known universal deformations and their symmetry groups}

In this section, we recall the classical families of universal deformations in the absence of residual stress \citep{Ericksen1954, Ericksen1955, SinghPipkin1965, KlingbeilShield1966}. Our emphasis is instead on the symmetry properties of these deformations \citep{Goodbrake2020}, which will be central to the subsequent analysis.
Using standard Cartesian coordinates $\{X,Y,Z\}$, $\{x,y,z\}$, cylindrical coordinates $\{R,\Theta,Z\}$, $\{r,\theta,z\}$, and spherical coordinates $\{R,\Theta,\Phi\}$, $\{r,\theta,\varphi\}$, with capital letters denoting reference coordinates and lower case letters denoting spatial coordinates, the known universal deformations fall into the following families.

%-----------------------------
\paragraph{Family 0: Homogeneous deformations.}
In Cartesian coordinates $\{X^A\}$ and $\{x^a\}$, these deformations are given by
%-----------------------------
\begin{equation}\label{eqn:Family0-u}
	x^a=F^a{}_A X^A + c^a\,,
\end{equation}
%-----------------------------
where $F^a{}_A$ is constant with $\det \mathbf{F}=1$, and $c^a$ is constant. The deformation gradient is uniform, i.e., $F^a{}_A(X)=F^a{}_A$, and hence the right Cauchy--Green tensor $\mathbf{C}^\flat$ is constant.

Consider the action of the translation group $\mathrm{T}(3)$ on $\mathcal{B}$,
%-----------------------------
\begin{equation}
	X^A\mapsto \bar{X}^A = X^A + C^A\,, \qquad C^A\in\mathbb{R}\,.
\end{equation}
%-----------------------------
For this action, the inverse map is $X^A\mapsto X^A-C^A$, and its tangent map is the identity, i.e., $T_X\rho_{g^{-1}}=\mathbf{I}$.
Therefore, the prolonged action on a $(0,2)$-tensor field $\mathbf{T}$ reduces to $(g\cdot\mathbf{T})_X = \mathbf{T}_{X - C}$. Applying this to $\mathbf{C}^\flat$, one obtains $(g\cdot \mathbf{C}^\flat)_X =\mathbf{C}^\flat_{X - C}$.
Since $\mathbf{C}^\flat$ is constant, one has $\mathbf{C}^\flat_{X - C}=\mathbf{C}^\flat_X$, and hence
%-----------------------------
\begin{equation}
	(g\cdot \mathbf{C}^\flat)_X = \mathbf{C}^\flat_X\,,
\end{equation}
%-----------------------------
which shows that $\mathbf{C}^\flat$ is invariant under translations. This corresponds to the natural action of $\mathrm{T}(3)\subset \mathrm{SE}(3)$ on $\mathbb{E}^3$. Here, $\mathbb{E}^3$ denotes three-dimensional Euclidean space.

%-----------------------------
%-----------------------------
\paragraph{Family 1: Bending, stretching, and shearing of a rectangular block.}
Using cylindrical coordinates in the spatial configuration and Cartesian coordinates in the reference configuration, the deformation is written as
%-----------------------------
\begin{equation}\label{eqn:Family1-u}
	r=\sqrt{C_1(2X+C_4)}\,, \qquad
	\theta=C_2(Y+C_5)\,, \qquad
	z=\frac{Z}{C_1C_2}-C_2C_3 Y+C_6\,.
\end{equation}
%-----------------------------
The parameters $C_5$ and $C_6$ correspond to rigid motions and can be disregarded. The deformation gradient reads
%-----------------------------
\begin{equation}\label{eqn:Family1-F}
	\mathbf{F}
	=\frac{C_1}{r}\mathbf{e}_r\otimes \mathbf{E}_X
	+C_2 r\,\mathbf{e}_\theta\otimes \mathbf{E}_Y
	-C_2C_3\,\mathbf{e}_z\otimes \mathbf{E}_Y
	+\frac{1}{C_1C_2}\mathbf{e}_z\otimes \mathbf{E}_Z\,,
\end{equation}
%-----------------------------
and in coordinates
%-----------------------------
\begin{equation}
	\left[F^a{}_A\right]
	=
	\begin{bmatrix}
	\frac{C_1}{r}&0&0\\
	0&C_2&0\\
	0&-C_2C_3&\frac{1}{C_1C_2}
	\end{bmatrix}.
\end{equation}
%-----------------------------
The associated tensor $\mathbf{C}^{\flat}$ has the representation
%-----------------------------
\begin{equation} 
	[C_{AB}]=\begin{bmatrix}
	\frac{C_1}{2X+C_4} & 0 & 0\\
	0 & C_2^2\left[C_1(2X+C_4)+C_3^2\right] & -\frac{C_3}{C_1} \\
	0 & -\frac{C_3}{C_1} & \frac{1}{C_1^2C_2^2}
	\end{bmatrix}
	\,,
\end{equation}
%-----------------------------
which depends only on $X$. Consider the action of the translation group $\mathrm{T}(2)$ on $\mathcal{B}$,
%-----------------------------
\begin{equation}
	Y\mapsto \bar{Y}=Y+D_1\,, \qquad Z\mapsto \bar{Z}=Z+D_2\,.
\end{equation}
%-----------------------------
For this action, the inverse map is $Y\mapsto Y-D_1$, $Z\mapsto Z-D_2$, and the tangent map is the identity, i.e., $T_X\rho_{g^{-1}}=\mathbf{I}$. Therefore, the prolonged action on a $(0,2)$-tensor field $\mathbf{T}$ reduces to $(g\cdot\mathbf{T})_X=\mathbf{T}_{(X,\,Y-D_1,\,Z-D_2)}$. Applying this to $\mathbf{C}^\flat$, one obtains $(g\cdot\mathbf{C}^\flat)_X=\mathbf{C}^\flat_{(X,\,Y-D_1,\,Z-D_2)}$.
Since $\mathbf{C}^\flat$ depends only on $X$, one has $\mathbf{C}^\flat_{(X,\,Y-D_1,\,Z-D_2)}=\mathbf{C}^\flat_X$, and hence
%-----------------------------
\begin{equation}
	(g\cdot \mathbf{C}^\flat)_X=\mathbf{C}^\flat_X\,,
\end{equation}
%-----------------------------
which shows that $\mathbf{C}^\flat$ is invariant under translations in $Y$ and $Z$. This corresponds to the action of $\mathrm{T}(2)\subset \mathrm{SE}(3)$.

The tensor $\mathbf{b}^{\sharp}$ has the representation
%-----------------------------
\begin{equation} 
	[b^{ab}]=\begin{bmatrix}
	\frac{C_1}{C_4+2 X} & 0 & 0 \\
	0 & C_2^2 & -C_2^2 C_3 \\
	0 & -C_2^2 C_3 & \frac{1}{C_1^2 C_2^2}+C_2^2 C_3^2
	\end{bmatrix}
	\,,
\end{equation}
%-----------------------------
which, when expressed in spatial coordinates, depends only on $r$. Consider the action of $\mathrm{SO}(2)\times \mathrm{T}(1)$ on the spatial configuration,
%-----------------------------
\begin{equation}
	\theta\mapsto \bar{\theta}=\theta+\theta_0\,, \qquad z\mapsto \bar{z}=z+z_0\,.
\end{equation}
%-----------------------------
For this action, the inverse map is $\theta\mapsto \theta-\theta_0$, $z\mapsto z-z_0$, and the tangent map preserves the orthonormal frame $\{\mathbf{e}_r,\mathbf{e}_\theta,\mathbf{e}_z\}$. Hence, the prolonged action reduces to $(g\cdot\mathbf{b}^{\sharp})_{\mathbf{x}}=\mathbf{b}^{\sharp}_{(r,\,\theta-\theta_0,\,z-z_0)}$. Since $\mathbf{b}^{\sharp}$ depends only on $r$, one has $\mathbf{b}^{\sharp}_{(r,\,\theta-\theta_0,\,z-z_0)}=\mathbf{b}^{\sharp}_{\mathbf{x}}$, and therefore
%-----------------------------
\begin{equation}
	(g\cdot \mathbf{b}^{\sharp})_{\mathbf{x}}=\mathbf{b}^{\sharp}_{\mathbf{x}}\,,
\end{equation}
%-----------------------------
which shows that $\mathbf{b}^{\sharp}$ is invariant under the action of $\mathrm{SO}(2)\times \mathrm{T}(1)\subset \mathrm{SE}(3)$.

%-----------------------------
%-----------------------------
\paragraph{Family 2: Straightening, stretching, and shearing of a sector of a cylindrical shell.}
In cylindrical coordinates in the reference configuration and Cartesian coordinates in the spatial configuration,
%-----------------------------
\begin{equation}\label{eqn:Family2-u}
	x=\frac{1}{2}C_1 C_2^2 R^2+C_4\,, \qquad
	y=\frac{\Theta}{C_1 C_2}+C_5\,, \qquad
	z=\frac{Z}{C_2}+\frac{C_3\Theta}{C_1 C_2}+C_6\,.
\end{equation}
%-----------------------------
The deformation gradient is written as
%-----------------------------
\begin{equation}\label{eqn:Family2-F}
	\mathbf{F}
	=
	C_1 C_2^2 R\,\mathbf{e}_x\otimes \mathbf{E}_R
	+\frac{1}{C_1 C_2 R}\mathbf{e}_y\otimes \mathbf{E}_\Theta
	+\frac{C_3}{C_1 C_2 R}\mathbf{e}_z\otimes \mathbf{E}_\Theta
	+\frac{1}{C_2}\mathbf{e}_z\otimes \mathbf{E}_Z\,,
\end{equation}
%-----------------------------
and in coordinates
%-----------------------------
\begin{equation}
	\left[F^a{}_A\right]
	=
	\begin{bmatrix}
	C_1 C_2^2 R & 0 & 0 \\
	0 & \frac{1}{C_1 C_2 R} & 0 \\
	0 & \frac{C_3}{C_1 C_2 R} & \frac{1}{C_2}
	\end{bmatrix}.
\end{equation}
%-----------------------------
The associated tensors $\mathbf{C}^{\flat}$ and $\mathbf{b}^{\sharp}$ have the following representations
%-----------------------------
\begin{equation} \label{C-Family2}
	[C_{AB}]=\begin{bmatrix}
	C_1^2 C_2^4 R^2 & 0 & 0 \\
	 0 & \frac{C_3^2+1}{C_1^2 C_2^2} & \frac{C_3}{C_1 C_2^2}   \\
	 0 & \frac{C_3}{C_1 C_2^2} & \frac{1}{C_2^2}
	\end{bmatrix}
	\,,\qquad
	[b^{ab}]=\begin{bmatrix}
	C_1^2 C_2^4 R^2 & 0 & 0 \\
	0 & \frac{1}{C_1^2 C_2^2 R^2} & \frac{C_3}{C_1^2 C_2^2 R^2}   \\
	0 & \frac{C_3}{C_1^2 C_2^2 R^2} & \frac{1}{C_2^2}+\frac{C_3^2}{C_1^2C_2^2 R^2}
	\end{bmatrix}
	\,.
\end{equation}
%-----------------------------
The parameters $C_4$, $C_5$, and $C_6$ correspond to rigid motions and can be disregarded. The tensor $\mathbf{C}^{\flat}$ depends only on $R$. Consider the action of $\mathrm{SO}(2)\times \mathrm{T}(1)$ on $\mathcal{B}$,
%-----------------------------
\begin{equation}
	\Theta\mapsto \bar{\Theta}=\Theta+\Theta_0\,, \qquad
	Z\mapsto \bar{Z}=Z+Z_0\,.
\end{equation}
%-----------------------------
For this action, the inverse map is $\Theta\mapsto \Theta-\Theta_0$, $Z\mapsto Z-Z_0$, and the tangent map preserves the orthonormal frame $\{\mathbf{E}_R,\mathbf{E}_\Theta,\mathbf{E}_Z\}$. Hence, the prolonged action reduces to $(g\cdot\mathbf{T})_X=\mathbf{T}_{(R,\,\Theta-\Theta_0,\,Z-Z_0)}$. Applying this to $\mathbf{C}^\flat$, one obtains $(g\cdot \mathbf{C}^\flat)_X=\mathbf{C}^\flat_{(R,\,\Theta-\Theta_0,\,Z-Z_0)}$.
Since $\mathbf{C}^\flat$ depends only on $R$, one has $\mathbf{C}^\flat_{(R,\,\Theta-\Theta_0,\,Z-Z_0)}=\mathbf{C}^\flat_X$, and hence
%-----------------------------
\begin{equation}
	(g\cdot \mathbf{C}^\flat)_X=\mathbf{C}^\flat_X\,,
\end{equation}
%-----------------------------
which shows that $\mathbf{C}^\flat$ is invariant under the action of $\mathrm{SO}(2)\times \mathrm{T}(1)\subset \mathrm{SE}(3)$.

The tensor $\mathbf{b}^{\sharp}$, when expressed in spatial coordinates, depends only on $x$. Consider the action of $\mathrm{T}(2)$ on the spatial configuration,
%-----------------------------
\begin{equation}
	y\mapsto \bar{y}=y+y_0\,, \qquad z\mapsto \bar{z}=z+z_0\,.
\end{equation}
%-----------------------------
For this action, the inverse map is $y\mapsto y-y_0$, $z\mapsto z-z_0$, and the tangent map is the identity, i.e., $T_{\mathbf{x}}\rho_{g^{-1}}=\mathbf{I}$. Hence, the prolonged action reduces to $(g\cdot\mathbf{b}^{\sharp})_{\mathbf{x}}=\mathbf{b}^{\sharp}_{(x,\,y-y_0,\,z-z_0)}$. Since $\mathbf{b}^{\sharp}$ depends only on $x$, one has $\mathbf{b}^{\sharp}_{(x,\,y-y_0,\,z-z_0)}=\mathbf{b}^{\sharp}_{\mathbf{x}}$, and therefore
%-----------------------------
\begin{equation}
	(g\cdot \mathbf{b}^{\sharp})_{\mathbf{x}}=\mathbf{b}^{\sharp}_{\mathbf{x}}\,,
\end{equation}
%-----------------------------
which shows that $\mathbf{b}^{\sharp}$ is invariant under the action of $\mathrm{T}(2)\subset \mathrm{SE}(3)$.

%-----------------------------
%-----------------------------
\paragraph{Family 3: Inflation, bending, torsion, extension, and shearing of a sector of an annular wedge.}
In cylindrical coordinates in the reference and spatial configurations, the deformation is written as
%-----------------------------
\begin{equation}\label{eqn:Family3-u}
	r=\sqrt{\frac{R^2}{C_1C_4-C_2C_3}+C_5}\,,\qquad
	\theta=C_1\Theta+C_2 Z+C_6\,,\qquad
	z=C_3\Theta+C_4 Z+C_7\,.
\end{equation}
%-----------------------------
Let $K=C_1C_4-C_2C_3\neq 0$. The parameters $C_6$ and $C_7$ correspond to rigid motions and can be disregarded. The deformation gradient reads
%-----------------------------
\begin{equation}\label{eqn:Family3-F}
	\mathbf{F}
	=
	\frac{R}{Kr}\mathbf{e}_r\otimes \mathbf{E}_R
	+\frac{C_1 r}{R}\mathbf{e}_\theta\otimes \mathbf{E}_\Theta
	+C_2 r\,\mathbf{e}_\theta\otimes \mathbf{E}_Z
	+\frac{C_3}{R}\mathbf{e}_z\otimes \mathbf{E}_\Theta
	+C_4\,\mathbf{e}_z\otimes \mathbf{E}_Z\,,
\end{equation}
%-----------------------------
and in coordinates
%-----------------------------
\begin{equation}
	\left[F^a{}_A\right]
	=
	\begin{bmatrix}
	\frac{R}{Kr} & 0 & 0\\
	0 & C_1 & C_2\\
	0 & C_3 & C_4
	\end{bmatrix}\,.
\end{equation}
%-----------------------------
The associated tensor $\mathbf{C}^{\flat}$ has the representation
%-----------------------------
\begin{equation} 
	[C_{AB}]=
	\begin{bmatrix}
	\frac{R^2}{K(K C_5+R^2)} & 0 & 0 \\
	 0 & C_3^2+C_1^2 \left(\frac{R^2}{K}+C_5\right) 
	 & C_1 C_2 \left(\frac{R^2}{K}+C_5\right)+C_3 C_4 \\
	 0 & C_1 C_2 \left(\frac{R^2}{K}+C_5\right)+C_3 C_4 
	 & C_4^2+C_2^2 \left(\frac{R^2}{K}+C_5\right)
	\end{bmatrix}\,.
\end{equation}
%-----------------------------
The tensor $\mathbf{C}^{\flat}$ depends only on $R$. Consider the action of $\mathrm{SO}(2)\times \mathrm{T}(1)$ on $\mathcal{B}$,
%-----------------------------
\begin{equation}
	\Theta\mapsto \bar{\Theta}=\Theta+\Theta_0\,,\qquad Z\mapsto \bar{Z}=Z+Z_0\,.
\end{equation}
%-----------------------------
For this action, the inverse map is $\Theta\mapsto \Theta-\Theta_0$, $Z\mapsto Z-Z_0$, and the tangent map preserves the orthonormal frame $\{\mathbf{E}_R,\mathbf{E}_\Theta,\mathbf{E}_Z\}$. Hence, the prolonged action reduces to $(g\cdot\mathbf{T})_X=\mathbf{T}_{(R,\,\Theta-\Theta_0,\,Z-Z_0)}$. Applying this to $\mathbf{C}^\flat$, one obtains $(g\cdot \mathbf{C}^\flat)_X=\mathbf{C}^\flat_{(R,\,\Theta-\Theta_0,\,Z-Z_0)}$. Since $\mathbf{C}^\flat$ depends only on $R$, one has $\mathbf{C}^\flat_{(R,\,\Theta-\Theta_0,\,Z-Z_0)}=\mathbf{C}^\flat_X$, and hence
%-----------------------------
\begin{equation}
	(g\cdot \mathbf{C}^\flat)_X=\mathbf{C}^\flat_X\,,
\end{equation}
%-----------------------------
which shows that $\mathbf{C}^\flat$ is invariant under the action of $\mathrm{SO}(2)\times \mathrm{T}(1)\subset \mathrm{SE}(3)$.

The tensor $\mathbf{b}^{\sharp}$ has the representation
%-----------------------------
\begin{equation} 
	[b^{ab}]=\begin{bmatrix}
	 \frac{R^2}{K \left(C_5 K+R^2\right)} & 0 & 0 \\
	 0 & \frac{C_1^2}{R^2}+C_2^2 & \frac{C_1 C_3}{R^2}+C_2  C_4 \\
	 0 & \frac{C_1 C_3}{R^2}+C_2 C_4 &  \frac{C_3^2}{R^2}+C_4^2
	\end{bmatrix}
	\,,
\end{equation}
%-----------------------------
which, when expressed in spatial coordinates, depends only on $r$. Consider the action of $\mathrm{SO}(2)\times \mathrm{T}(1)$ on the spatial configuration,
%-----------------------------
\begin{equation}
	\theta\mapsto \bar{\theta}=\theta+\theta_0\,,\qquad z\mapsto \bar{z}=z+z_0\,.
\end{equation}
%-----------------------------
For this action, the inverse map is $\theta\mapsto \theta-\theta_0$, $z\mapsto z-z_0$, and the tangent map preserves the orthonormal frame $\{\mathbf{e}_r,\mathbf{e}_\theta,\mathbf{e}_z\}$. Hence, the prolonged action reduces to $(g\cdot\mathbf{b}^{\sharp})_{\mathbf{x}}=\mathbf{b}^{\sharp}_{(r,\,\theta-\theta_0,\,z-z_0)}$. Since $\mathbf{b}^{\sharp}$ depends only on $r$, one has $\mathbf{b}^{\sharp}_{(r,\,\theta-\theta_0,\,z-z_0)}=\mathbf{b}^{\sharp}_{\mathbf{x}}$, and therefore
%-----------------------------
\begin{equation}
	(g\cdot \mathbf{b}^{\sharp})_{\mathbf{x}}=\mathbf{b}^{\sharp}_{\mathbf{x}}\,,
\end{equation}
%-----------------------------
which shows that $\mathbf{b}^{\sharp}$ is invariant under the action of $\mathrm{SO}(2)\times \mathrm{T}(1)\subset \mathrm{SE}(3)$.
%-----------------------------

%-----------------------------
%-----------------------------
\paragraph{Family 4: Inflation/inversion of a sector of a spherical shell.}
In spherical coordinates in the reference and current configurations,
%-----------------------------
\begin{equation}\label{eqn:Family4-u}
	r^3=\pm R^3 + C_1^3\,, \qquad
	\theta=\pm \Theta\,, \qquad
	\phi=\Phi\,.
\end{equation}
%-----------------------------
The deformation gradient is diagonal in the spherical basis, and the associated tensor $\mathbf{C}^{\flat}$ has the representation
%-----------------------------
\begin{equation} 
	[C_{AB}]=\begin{bmatrix}
	\frac{R^4}{\left(C_1^3 \pm R^3\right)^{4/3}} & 0 & 0\\
	0 & \left(C_1^3 \pm R^3\right)^{2/3} & 0 \\
	0 & 0 & \left(C_1^3 \pm R^3\right)^{2/3} \sin^2\Theta
	\end{bmatrix}
	\,,
\end{equation}
%-----------------------------
which depends only on $R$. Equivalently, one may write
%-----------------------------
\begin{equation} 
	\mathbf{C}^{\flat}(\mathbf{X})
	=\frac{R^4}{\left(C_1^3 \pm R^3\right)^{4/3}}\hat{\mathbf{R}}\otimes\hat{\mathbf{R}}
	+\frac{\left(C_1^3 \pm R^3\right)^{2/3}}{R^2}(\mathbf{I}-\hat{\mathbf{R}}\otimes\hat{\mathbf{R}})
	\,,
\end{equation}
%-----------------------------
where $\hat{\mathbf{R}}=\frac{\mathbf{X}}{|\mathbf{X}|}$. 

Consider the action of $\mathrm{SO}(3)$ on $\mathcal{B}$,
%-----------------------------
\begin{equation}
	\mathbf{X}\mapsto \bar{\mathbf{X}}=\mathbf{Q}\mathbf{X}\,, 
	\qquad \forall\,\mathbf{Q}\in \mathrm{SO}(3)\,.
\end{equation}
%-----------------------------
For this action, the inverse map is $\mathbf{X}\mapsto \mathbf{Q}^{\mathsf{T}}\mathbf{X}$.
Moreover, the map is linear, so its tangent map equals the same linear transformation, i.e., $T_X\rho_g=\mathbf{Q}$, while the tangent map of the inverse is $T_X\rho_{g^{-1}}=\mathbf{Q}^{\mathsf{T}}$.
Hence, the prolonged action on a $(0,2)$-tensor field $\mathbf{T}$ reduces to
$(g\cdot\mathbf{T})_X(\mathbf{U},\mathbf{V})=\mathbf{T}_{\mathbf{Q}^{\mathsf{T}}\mathbf{X}}(\mathbf{Q}^{\mathsf{T}}\mathbf{U},\mathbf{Q}^{\mathsf{T}}\mathbf{V})$.  Applying this to $\mathbf{C}^\flat$, one obtains
$(g\cdot\mathbf{C}^\flat)_X(\mathbf{U},\mathbf{V})=\mathbf{C}^\flat_{\mathbf{Q}^{\mathsf{T}}\mathbf{X}}(\mathbf{Q}^{\mathsf{T}}\mathbf{U},\mathbf{Q}^{\mathsf{T}}\mathbf{V})$.

Let $R=|\mathbf{X}|$ and $\hat{\mathbf{R}}=\mathbf{X}/|\mathbf{X}|$. Then $|\mathbf{Q}^{\mathsf{T}}\mathbf{X}|=|\mathbf{X}|=R$ and $\widehat{\mathbf{Q}^{\mathsf{T}}\mathbf{X}}=\mathbf{Q}^{\mathsf{T}}\hat{\mathbf{R}}$.
Thus, $\hat{\mathbf{R}}\otimes\hat{\mathbf{R}} \mapsto \mathbf{Q}^{\mathsf{T}}(\hat{\mathbf{R}}\otimes\hat{\mathbf{R}})\mathbf{Q}$ and $\mathbf{I}-\hat{\mathbf{R}}\otimes\hat{\mathbf{R}} \mapsto \mathbf{Q}^{\mathsf{T}}(\mathbf{I}-\hat{\mathbf{R}}\otimes\hat{\mathbf{R}})\mathbf{Q}$. Therefore, if
$\mathbf{C}^\flat = A(R)\,\hat{\mathbf{R}}\otimes\hat{\mathbf{R}} + B(R)\,(\mathbf{I}-\hat{\mathbf{R}}\otimes\hat{\mathbf{R}})$, then $\mathbf{C}^\flat_{\mathbf{Q}^{\mathsf{T}}\mathbf{X}} = \mathbf{Q}^{\mathsf{T}}\mathbf{C}^\flat_{\mathbf{X}}\mathbf{Q}$. Substituting into the prolonged action gives us
%-----------------------------
\begin{align}
	(g\cdot\mathbf{C}^\flat)_{\mathbf{X}}(\mathbf{U},\mathbf{V})
	&=\mathbf{C}^\flat_{\mathbf{Q}^{\mathsf{T}}\mathbf{X}}
	(\mathbf{Q}^{\mathsf{T}}\mathbf{U},\mathbf{Q}^{\mathsf{T}}\mathbf{V}) \nonumber\\
	&=\llangle 
	\mathbf{Q}^{\mathsf{T}}\mathbf{U}\,,\,
	\mathbf{C}^{\sharp}_{\mathbf{Q}^{\mathsf{T}}\mathbf{X}}
	\mathbf{Q}^{\mathsf{T}}\mathbf{V}
	\rrangle_{\mathbf{G}} \nonumber\\
	&=\llangle 
	\mathbf{Q}^{\mathsf{T}}\mathbf{U}\,,\,
	\mathbf{Q}^{\mathsf{T}}
	\mathbf{C}^{\sharp}_{\mathbf{X}}
	\mathbf{Q}\mathbf{Q}^{\mathsf{T}}\mathbf{V}
	\rrangle_{\mathbf{G}} \nonumber\\
	&=\llangle 
	\mathbf{Q}^{\mathsf{T}}\mathbf{U}\,,\,
	\mathbf{Q}^{\mathsf{T}}
	\mathbf{C}^{\sharp}_{\mathbf{X}}
	\mathbf{V}
	\rrangle_{\mathbf{G}} \nonumber\\
	&=\llangle 
	\mathbf{U}\,,\,
	\mathbf{C}^{\sharp}_{\mathbf{X}}
	\mathbf{V}
	\rrangle_{\mathbf{G}} \nonumber\\
	&=\mathbf{C}^\flat_{\mathbf{X}}(\mathbf{U},\mathbf{V})\,,
\end{align}
%-----------------------------
where orthogonality of $\mathbf{Q}$ was used.
Hence,
%-----------------------------
\begin{equation}
	(g\cdot \mathbf{C}^\flat)_{\mathbf{X}}=\mathbf{C}^\flat_{\mathbf{X}}\,,
\end{equation}
%-----------------------------
which shows that $\mathbf{C}^\flat$ is invariant under the action of $\mathrm{SO}(3)\subset \mathrm{SE}(3)$.

The tensor $\mathbf{b}^{\sharp}$ has the representation
%-----------------------------
\begin{equation} 
	[b^{ab}]=\begin{bmatrix}
	 C_1^2 & \frac{C_1 C_2}{R} & 0 \\
	 \frac{C_1 C_2}{R} & \frac{C_2^2+C_3^2}{R^2} & 0 \\
	 0 & 0 & \frac{1}{C_1^4 C_3^2}   
	\end{bmatrix}
	\,,
\end{equation}
%-----------------------------
which, when expressed in spatial coordinates, depends only on $r$. Consider the action of $\mathrm{SO}(3)$ on the spatial configuration,
%-----------------------------
\begin{equation}
	\mathbf{x}\mapsto \bar{\mathbf{x}}=\mathbf{Q}\mathbf{x}\,, \qquad \forall\,\mathbf{Q}\in \mathrm{SO}(3)\,.
\end{equation}
%-----------------------------
For this action, the inverse map is $\mathbf{x}\mapsto \mathbf{Q}^{\mathsf{T}}\mathbf{x}$, and the tangent map is $T_{\mathbf{x}}\rho_{g^{-1}}=\mathbf{Q}^{\mathsf{T}}$. Hence, the prolonged action on a $(2,0)$-tensor field reduces to
$(g\cdot\mathbf{b}^{\sharp})_{\mathbf{x}}=\mathbf{Q}\mathbf{b}^{\sharp}_{\mathbf{Q}^{\mathsf{T}}\mathbf{x}}\mathbf{Q}^{\mathsf{T}}$. Since $\mathbf{b}^{\sharp}$ depends only on $r$, and since $|\mathbf{Q}^{\mathsf{T}}\mathbf{x}|=|\mathbf{x}|$, one has $\mathbf{b}^{\sharp}_{\mathbf{Q}^{\mathsf{T}}\mathbf{x}}=\mathbf{b}^{\sharp}_{\mathbf{x}}$ in the rotated spherical frame. Therefore,
%-----------------------------
\begin{equation}
	(g\cdot \mathbf{b}^{\sharp})_{\mathbf{x}}=\mathbf{b}^{\sharp}_{\mathbf{x}}\,,
\end{equation}
%-----------------------------
which shows that $\mathbf{b}^{\sharp}$ is invariant under the action of $\mathrm{SO}(3)\subset \mathrm{SE}(3)$.
%-----------------------------

%-----------------------------
%-----------------------------
\paragraph{Family 5: Inflation, bending, extension, and azimuthal shearing of an annular wedge.}
In cylindrical coordinates in the reference and current configurations,
%-----------------------------
\begin{equation}\label{eqn:Family5-u}
	r=C_1 R\,, \qquad
	\theta=C_2\log R + C_3\Theta + C_4\,, \qquad
	z=\frac{1}{C_1^2C_3}Z+C_5\,.
\end{equation}
%-----------------------------
The parameters $C_4$ and $C_5$ correspond to rigid motions and can be disregarded. The deformation gradient reads
%-----------------------------
\begin{equation}\label{eqn:Family5-F}
	\mathbf{F}
	=
	C_1\,\mathbf{e}_r\otimes \mathbf{E}_R
	+\frac{C_2}{R}\,\mathbf{e}_\theta\otimes \mathbf{E}_R
	+C_3\,\mathbf{e}_\theta\otimes \mathbf{E}_\Theta
	+\frac{1}{C_1^2C_3}\,\mathbf{e}_z\otimes \mathbf{E}_Z\,,
\end{equation}
%-----------------------------
and in coordinates
%-----------------------------
\begin{equation}
	\left[F^a{}_A\right]
	=
	\begin{bmatrix}
	C_1 & 0 & 0\\
	\frac{C_2}{R} & C_3 & 0\\
	0 & 0 & \frac{1}{C_1^2C_3}
	\end{bmatrix}.
\end{equation}
%-----------------------------
The associated tensor $\mathbf{C}^{\flat}$ has the representation
%-----------------------------
\begin{equation} 
	[C_{AB}]=\begin{bmatrix}
	C_1^2 \left(C_2^2+1\right) & C_1^2 C_2 C_3 R & 0 \\
        C_1^2 C_2 C_3 R & C_1^2 C_3^2 R^2 & 0 \\
	0 & 0 & \frac{1}{C_1^4 C_3^2}
	\end{bmatrix}
	\,,
\end{equation}
%-----------------------------
which depends only on $R$. 
Consider the action of $\mathrm{SO}(2)\times \mathrm{T}(1)$ on $\mathcal{B}$,
%-----------------------------
\begin{equation}
	\Theta\mapsto \bar{\Theta}=\Theta+\Theta_0\,, \qquad
	Z\mapsto \bar{Z}=Z+Z_0\,.
\end{equation}
%-----------------------------
For this action, the inverse map is $\Theta\mapsto \Theta-\Theta_0$, $Z\mapsto Z-Z_0$, and the tangent map preserves the orthonormal frame $\{\mathbf{E}_R,\mathbf{E}_\Theta,\mathbf{E}_Z\}$. Hence, the prolonged action reduces to $(g\cdot\mathbf{T})_X=\mathbf{T}_{(R,\,\Theta-\Theta_0,\,Z-Z_0)}$. Applying this to $\mathbf{C}^\flat$, one obtains $(g\cdot \mathbf{C}^\flat)_X=\mathbf{C}^\flat_{(R,\,\Theta-\Theta_0,\,Z-Z_0)}$.
Since $\mathbf{C}^\flat$ depends only on $R$, one has $\mathbf{C}^\flat_{(R,\,\Theta-\Theta_0,\,Z-Z_0)}=\mathbf{C}^\flat_X$, and hence
%-----------------------------
\begin{equation}
	(g\cdot \mathbf{C}^\flat)_X=\mathbf{C}^\flat_X\,,
\end{equation}
%-----------------------------
which shows that $\mathbf{C}^\flat$ is invariant under the action of $\mathrm{SO}(2)\times \mathrm{T}(1)\subset \mathrm{SE}(3)$.

The tensor $\mathbf{b}^{\sharp}$ has the representation
%-----------------------------
\begin{equation}
	[b^{ab}]=\begin{bmatrix}
	 C_1^2 & \frac{C_1 C_2}{R} & 0 \\
	 \frac{C_1 C_2}{R} & \frac{C_2^2+C_3^2}{R^2} & 0 \\
	 0 & 0 & \frac{1}{C_1^4 C_3^2}
	\end{bmatrix}
	\,,
\end{equation}
%-----------------------------
which, when expressed in spatial coordinates, depends only on $r$. Consider the action of $\mathrm{SO}(2)\times \mathrm{T}(1)$ on the spatial configuration,
%-----------------------------
\begin{equation}
	\theta\mapsto \bar{\theta}=\theta+\theta_0\,, \qquad z\mapsto \bar{z}=z+z_0\,.
\end{equation}
%-----------------------------
For this action, the inverse map is $\theta\mapsto \theta-\theta_0$, $z\mapsto z-z_0$, and the tangent map preserves the orthonormal frame $\{\mathbf{e}_r,\mathbf{e}_\theta,\mathbf{e}_z\}$. Hence, the prolonged action reduces to $(g\cdot\mathbf{b}^{\sharp})_{\mathbf{x}}=\mathbf{b}^{\sharp}_{(r,\,\theta-\theta_0,\,z-z_0)}$. Since $\mathbf{b}^{\sharp}$ depends only on $r$, one has $\mathbf{b}^{\sharp}_{(r,\,\theta-\theta_0,\,z-z_0)}=\mathbf{b}^{\sharp}_{\mathbf{x}}$, and therefore
%-----------------------------
\begin{equation}
	(g\cdot \mathbf{b}^{\sharp})_{\mathbf{x}}=\mathbf{b}^{\sharp}_{\mathbf{x}}\,,
\end{equation}
%-----------------------------
which shows that $\mathbf{b}^{\sharp}$ is invariant under the action of $\mathrm{SO}(2)\times \mathrm{T}(1)\subset \mathrm{SE}(3)$.
%-----------------------------

%-----------------------------
%-----------------------------
\subsection{Symmetry structure of the universal deformations}

For each of the above families, the right Cauchy--Green tensor $\mathbf{C}^\flat$ is invariant under the prolonged action of a Lie subgroup of $\mathrm{SE}(3)$ acting on the reference configuration. Similarly, the left Cauchy--Green tensor $\mathbf{b}^\sharp$ admits a symmetry group acting on the current configuration. These symmetry groups need not coincide, although their dimensions agree, as they are related through the deformation.
The corresponding symmetry groups are summarized in Table \ref{table:symmetry groups}. They are expressed in terms of the rotation group $\mathrm{SO}(n)$ and the translation group $\mathrm{T}(n)$.

%-----------------------------
%-----------------------------
{\renewcommand{\arraystretch}{1}
\begin{table}[ht!]
\begin{center}
\begin{tabular}{|c|c|c|c|c|}
\hline 
Family & $\mathbf{C}^\flat$ & Dimension & $\mathbf{b}^\sharp$ & Dimension\\ \hline
0&$\text{T}(3)$&3&$\text{T}(3)$&3\\
1&$\text{T}(2)$&2&$\text{SO}(2)\times\text{T}(1)$&2\\
2&$\text{SO}(2)\times\text{T}(1)$&2&$\text{T}(2)$&2\\
3&$\text{SO}(2)\times\text{T}(1)$&2&$\text{SO}(2)\times\text{T}(1)$&2\\
4&$\text{SO}(3)$&3&$\text{SO}(3)$&3\\
5&$\text{SO}(2)\times\text{T}(1)$&2&$\text{SO}(2)\times\text{T}(1)$&2\\ \hline
\end{tabular}
\end{center}
\caption[]{Lie subgroups of $\text{SE}(3)$ under whose prolonged action the universal stretch tensor fields remain invariant. The symmetry groups are computed explicitly for both $\mathbf{C}^{\flat}$ and $\mathbf{b}^{\sharp}$; note in particular that for Family $4$ the invariance group is $\mathrm{SO}(2)$ rather than $\mathrm{SO}(3)$ \citep{Goodbrake2020}.}
\label{table:symmetry groups}
\end{table}
}
%-----------------------------
%-----------------------------

An important observation is that the dimension of the symmetry group influences the structure of the equilibrium equations. Families with three-dimensional symmetry groups satisfy equilibrium identically, whereas those with two-dimensional symmetry require additional constraints.
In summary, each universal deformation is characterized by the invariance of $\mathbf{C}^\flat$ under the prolonged action of a Lie subgroup of $\mathrm{SE}(3)$ acting on the reference configuration. This invariance will be used in the symmetry reduction of the system of nonlinear universality PDEs.
More specifically, we assume that the residual stress field $\mathring{\mathbf{S}}$ has the same symmetry as $\mathbf{C}^\flat$.

%-----------------------------
%-----------------------------
\section{Universal Residual Stresses for the Six Known Families of Universal Deformations} \label{Universal-R-Stressed}

In this section, for each of the known families of universal deformations, we find the corresponding universal residual stress fields assuming that the residual stress field has the same symmetry as the universal deformation.

%-----------------------------
%-----------------------------
\subsection{Family 0: Homogeneous deformations}

For homogeneous deformations, the deformation mapping is given by $x^a(\mathbf{X}) = F^a{}_A X^A + c^a$, where $[F^a{}_A]$ is a constant matrix and $c^a$ are the components of a constant vector. In Cartesian coordinates, the incompressibility constraint is written as $\det [F^a{}_A] = 1$.  
The right Cauchy–Green strain tensor in Cartesian coordinates has components $C_{AB} = F^a{}_A F^b{}_B \,\delta_{ab}$, which are constant.  
The tensors $[b^{ab}]$ and $[c^{ab}]$ are constant matrices, and both $I_1$ and $I_2$ are constant.  

For a homogeneous deformation, $\mathbf{b}$ is a constant tensor and its eigenvalues and eigenvectors are constant in space. The universality constraints then imply that, whenever $I_4$, $I_5$, and $I_6$ are not constant, their gradients $\nabla I_4$, $\nabla I_5$, and $\nabla I_6$ are all parallel to a single fixed eigenvector $\boldsymbol{\mathsf{e}}$ of $\mathbf{b}$. 
Let us choose spatial Cartesian coordinates $\{x^1,x^2,x^3\}$ such that $\boldsymbol{\mathsf{e}}=\mathbf{e}_1=\frac{\partial}{\partial x^1}$. Consequently, the invariants can vary only along $x^1$-direction and are locally of the form $I_j(x^1)$. 
In this case,
%-----------------------------
\begin{equation}
	\nabla I_j = I_j'(x^1)\,\mathbf{e}_1\,,\qquad
	\nabla\nabla I_j = I_j''(x^1)\,\mathbf{e}_1\otimes \mathbf{e}_1\,,
\end{equation}
%-----------------------------
and therefore the Hessian has rank at most one, annihilates any direction orthogonal to $\mathbf{e}_1$, and has $\mathbf{e}_1$ as an eigenvector with eigenvalue $I_j''(x^1)$.
In components, we have $I_{j,a}=I'_j\,\delta^1_a$ and $I_{j|ma}=I''_j\,\delta^1_a\,\delta^1_m$.

Recall that the Finger tensor has the following spectral representation:
%---------------------------------
\begin{equation}
	\mathbf{b}^\sharp= 
	\lambda_1^2 \,\n \otimes \n+\lambda_2^2 \,\nn \otimes\nn
	+\lambda_3^2 \,\nnn \otimes\nnn\,,
\end{equation}
%---------------------------------
where $\lambda_i$ are the principal stretches and $\n$, $\nn$, and $\nn$ are eigenvalues of the Finger tensor. When $\n=\mathbf{e}_1$, $b^{12}=b^{13}=0$, and hence, $b^1_2=b^2_1=b^1_3=b^3_1=0$.

Recall that $\mathring{\boldsymbol{\sigma}}=\mathbf{F}\mathring{\mathbf{S}}\mathbf{F}^\star$, or equivalently, $\mathring{\mathbf{S}}=\mathbf{F}^{-1}\,\mathring{\boldsymbol{\sigma}}\,\mathbf{F}^{-\star}$. Thus,\footnote{In components, %---------------------------------
\begin{equation}
	0=\mathring{S}^{AB}{}_{|B}
	=F^{-A}{}_a\,\mathring{\sigma}^{ab}{}_{|B}\,F^{-B}{}_b
	=F^{-A}{}_a\,\mathring{\sigma}^{ab}{}_{|c}\,F^c{}_B\,F^{-B}{}_b
	=F^{-A}{}_a\,\mathring{\sigma}^{ab}{}_{|c}\,\delta^c_b
	=F^{-A}{}_a\,\mathring{\sigma}^{ab}{}_{|b} \,.
\end{equation}
%---------------------------------
}
%---------------------------------
\begin{equation}
	\mathbf{0}=\operatorname{Div}\mathring{\mathbf{S}}
	=\mathbf{F}^{-1}\,(\operatorname{Div}
	\mathring{\boldsymbol{\sigma}})\,\mathbf{F}^{-\star}
	=\mathbf{F}^{-1}\,(\operatorname{div}
	\mathring{\boldsymbol{\sigma}})\mathbf{F}\,\mathbf{F}^{-\star}
	=\mathbf{F}^{-1}\,\operatorname{div} \mathring{\boldsymbol{\sigma}}\,.
\end{equation}
%---------------------------------
Therefore, for homogeneous deformations $\operatorname{div}\mathring{\boldsymbol{\sigma}}=\mathbf{0}$.
In this case, if principal invariants of $\mathring{\boldsymbol{\sigma}}$ are constant, $\mathring{\boldsymbol{\sigma}}$ would be a homogeneous residual stress, and hence, must vanish in order to satisfy the traction-free boundary conditions.

Note that $\mathcal{B}^{(j)}_{ab} = I''_j\,\delta^1_b\,\delta^1_m\, b_a^m= I''_j\,\delta^1_b\, b_a^1$. The corresponding universality constraints are $I''_j (\delta^1_b\, b_a^1-\delta^1_a\, b_b^1)=0$, which are trivially satisfied. Similarly, the universality constraints corresponding to $\mathcal{C}^{(j)}_{ab}$ are trivially satisfied as well. 

For homogeneous deformations the non-trivial universality constraints are symmetry of the following terms in $(a,b)$ where $j=4,\cdots, 10$ ($8$ terms) and $j\leq k =4,\cdots, 10$:
%-----------------------------
\begin{align}
	\label{Zj-Universal}
	\mathcal{Z}^{(j)}_{ab} &	 = I_{j|mb}\, \mathring{\sigma}_a^m 
	+ I_{j,m}\, \mathring{\sigma}_a^m{}_{|b} \,,\\
	\mathcal{W}_{ab} & = \mathring{\sigma}_a^n{}_{|mb}\, \mathring{\sigma}_n^m
	+ \mathring{\sigma}_a^n{}_{|m}\, \mathring{\sigma}_n^m{}_{|b}\,, \\
	\mathcal{W}^{(j)}_{ab} & = I_{j|mb}\, \mathring{\sigma}_a^n\, \mathring{\sigma}_n^m 
	+ I_{j,m}\,(\mathring{\sigma}_a^n{}_{|b}\, \mathring{\sigma}_n^m
             + \mathring{\sigma}_a^n\, \mathring{\sigma}_n^m{}_{|b}) 
	+ I_{j,b}\,\mathring{\sigma}_a^n{}_{|m}\, \mathring{\sigma}_n^m  \,,\\
	\mathcal{X}_{ab} & = \mathring{\sigma}_a^n{}_{|mb}\, b_n^m	\,, \\
	\mathcal{X}^{(j)}_{ab} & = I_{j|mb}\,(b_a^n\, \mathring{\sigma}_n^m 
	+ \mathring{\sigma}_a^n\, b_n^m) 
	+ I_{j,m}\,(b_a^n\, \mathring{\sigma}_n^m{}_{|b} 
         + \mathring{\sigma}_a^n{}_{|b}\, b_n^m ) 
         + I_{j,b}\,\mathring{\sigma}_a^n{}_{|m}\, b_n^m \,,\\
	\mathcal{H}_{ab} & =  \mathring{\sigma}_a^n{}_{|mb}\, c_n^m\,, \\
	\mathcal{H}^{(j)}_{ab} & = I_{j|mb} \left(c_a^n\, \mathring{\sigma}_n^m 
	+ \mathring{\sigma}_a^n\, c_n^m \right) 
	+ I_{j,m} \left(c_a^n\, \mathring{\sigma}_n^m{}_{|b} 
         + \mathring{\sigma}_a^n{}_{|b}\, c_n^m  \right)  
         + I_{j,b} \,\mathring{\sigma}_a^n{}_{|m}\, c_n^m  \,,\\
	\mathcal{K}_{ab} & = b_a^n\,\mathring{\sigma}^{2m}_{~n|mb} 
        + \mathring{\sigma}^{2n}_{~a|mb}\,b_n^m\,, \\
	\mathcal{K}^{(j)}_{ab} & = I_{j|bm}\,\big(b_a^n\,\mathring{\sigma}^{2m}_n 
	+ \mathring{\sigma}^{2n}_a\,b_n^m\big) 
	+ I_{j,m}\,\big(b_a^n\,\mathring{\sigma}^{2m}_{~n|b}
         + \mathring{\sigma}^{2n}_{~a|b}\,b_n^m \big) 
	+ I_{j,b}\,\big(b_a^n\,\mathring{\sigma}^{2m}_{~n|m}
         + \mathring{\sigma}^{2n}_{~a|m}\,b_n^m \big)  \,.
\end{align}
%-----------------------------
Knowing that the ambient space is flat, second-order covariant derivatives are symmetric and hence $\mathcal{X}_{ab} = \mathring{\sigma}_a^n{}_{|bm}\, b_n^m$. The corresponding universality constraints are: 
%---------------------------------
\begin{equation}
	(\mathring{\sigma}_a^n{}_{|bm}-\mathring{\sigma}_b^n{}_{|am})\, 
	b_n^m=(\mathring{\sigma}_a^n{}_{|b}-\mathring{\sigma}_b^n{}_{|a})_{|m}\, b_n^m=0\,.
\end{equation}
%---------------------------------
Similarly, symmetry of $\mathcal{H}_{ab}$ implies that 
%---------------------------------
\begin{equation}
	(\mathring{\sigma}_a^n{}_{|b}-\mathring{\sigma}_b^n{}_{|a})_{|m}\, c_n^m=0\,.
\end{equation}
%---------------------------------

Note that $\mathcal{W}_{ab}=\mathring{\sigma}_a^n{}_{|mb}\,\mathring{\sigma}_n^m
+\mathring{\sigma}_a^n{}_{|m}\,\mathring{\sigma}_n^m{}_{|b}
=\left(\mathring{\sigma}_a^n{}_{|m}\,\mathring{\sigma}_n^m\right)_{|b}$.
Therefore, symmetry of $\mathcal{W}_{ab}$ in $(a,b)$ implies that
%---------------------------------
\begin{equation}
	\left(\mathring{\sigma}_a^n{}_{|m}\,\mathring{\sigma}_n^m\right)_{|b}
	=
	\left(\mathring{\sigma}_b^n{}_{|m}\,\mathring{\sigma}_n^m\right)_{|a}\,.
\end{equation}
%---------------------------------
Hence, there exists a scalar field $\phi$ such that
%---------------------------------
\begin{equation}
	\mathring{\sigma}_a^n{}_{|m}\,\mathring{\sigma}_n^m=\phi_{,a}\,.
\end{equation}
%---------------------------------
In other words, $\operatorname{div}\mathring{\boldsymbol{\sigma}}^2$ is a gradient.

The universality constraint \eqref{Zj-Universal} gives us the following for $j=4,5,6$:
%---------------------------------
\begin{equation}\label{Ij-Derivatives}
\begin{dcases}
	\left(\mathring{\sigma}_1^1{}_{|2}-\mathring{\sigma}_2^1{}_{|1}\right)I'_j
	-\mathring{\sigma}_1^2 I''_j=0\,,\\
	\left(\mathring{\sigma}_1^1{}_{|3}-\mathring{\sigma}_3^1{}_{|1}\right)I'_j
	-\mathring{\sigma}_1^3 I''_j=0\,,\\
	\left(\mathring{\sigma}_2^1{}_{|3}-\mathring{\sigma}_3^1{}_{|2}\right)I'_j=0
	\,.
\end{dcases}
\end{equation}
%---------------------------------
If $I'_j = 0$, then $I_j$ is constant and \eqref{Ij-Derivatives} is satisfied identically.
Assume now that $I'_j \neq 0$. We have the following possibilities:

\noindent \textbf{(i)} If $\mathring{\sigma}_1^2 \neq 0$ and $\mathring{\sigma}_1^3 = 0$, then the first equation gives
%---------------------------------
\begin{equation}
	f'(x^1)= \frac{\mathring{\sigma}_1^1{}_{|2}-\mathring{\sigma}_2^1{}_{|1}}{\mathring{\sigma}_1^2}\,,
	\qquad I'_j(x^1)=c_j\,e^{f(x^1)}\,.
\end{equation}
%---------------------------------

\noindent \textbf{(ii)} If $\mathring{\sigma}_1^2 = 0$ and $\mathring{\sigma}_1^3 \neq 0$, then the second equation gives
%---------------------------------
\begin{equation}
	f'(x^1)= \frac{\mathring{\sigma}_1^1{}_{|3}-\mathring{\sigma}_3^1{}_{|1}}{\mathring{\sigma}_1^3}\,,
	\qquad I'_j(x^1)=c_j\,e^{f(x^1)}\,.
\end{equation}
%---------------------------------

\noindent \textbf{(iii)} If $\mathring{\sigma}_1^2 \neq 0$ and $\mathring{\sigma}_1^3 \neq 0$, then both equations must hold and hence
%---------------------------------
\begin{equation}
	f'(x^1) =\frac{\mathring{\sigma}_1^1{}_{|2}-\mathring{\sigma}_2^1{}_{|1}}{\mathring{\sigma}_1^2}
	= \frac{\mathring{\sigma}_1^1{}_{|3}-\mathring{\sigma}_3^1{}_{|1}}{\mathring{\sigma}_1^3}\,,
\end{equation}
%---------------------------------
which is a differential constraint on $\mathring{\boldsymbol{\sigma}}$, and in this case
%---------------------------------
\begin{equation}
	I'_j(x^1)=c_j\,e^{f(x^1)}\,.
\end{equation}
%---------------------------------

\noindent \textbf{(iv)} If $\mathring{\sigma}_1^2 = \mathring{\sigma}_1^3 = 0$, the first two equations reduce to differential restrictions on $\mathring{\boldsymbol{\sigma}}$.

We observe that even for the simplest family, namely homogeneous deformations, solving the full system of universality PDEs is not tractable. Assuming that $\mathring{\mathbf{S}}$ has the same symmetry as $\mathbf{C}^\flat$ does, $\mathring{\mathbf{S}}$ must be homogeneous. For such residual stress fields all the universality constraints are trivially satisfied.

%-----------------------------
%-----------------------------
\subsection{Family 1: Bending, stretching, and shearing of a rectangular block}

For this family of deformations given in \eqref{eqn:Family1-u}, we use the Cartesian $(X,Y,Z)$ and cylindrical $(r,\theta,z)$ coordinates in the reference and current configurations, respectively. Assuming that the residual stress field has the same symmetry as $\mathbf{C}^\flat$, we start with the following ansatz 
%-----------------------------
\begin{equation} 
	[\mathring{S}^{AB}]=\begin{bmatrix}
	f_1(X) & f_4(X) & f_6(X)\\
	f_4(X) & f_2(X) & f_5(X) \\
	f_6(X) & f_5(X) & f_3(X)
	\end{bmatrix}
	\,,
\end{equation}
%-----------------------------
which includes six unknown functions $f_i$, $i=1,\hdots 6$, of the single variable $X$.

The universality PDE corresponding to $\accentset{s}{\mathcal{Z}}^{(12)}_{23}= \accentset{s}{\mathcal{Z}}^{(12)}_{32}$ implies that $f_4(X)=0$. Similarly, the universality PDE corresponding to $\accentset{s}{\mathcal{Z}}^{(12)}_{13}= \accentset{s}{\mathcal{Z}}^{(12)}_{31}$ implies that $f_6(X)=0$. The remaining universality constraints are identically satisfied for arbitrary $f_1(X)$, $f_2(X)$, $f_3(X)$, and $f_5(X)$.\footnote{The symbolic manipulation and simplification of the universality PDEs for the universal deformation families and the associated residual stress fields were performed using \textit{Wolfram Mathematica} Version 14.3.}
Therefore, the universal residual stress fields have the following form
%-----------------------------
\begin{equation} 
	[\mathring{S}^{AB}]
	=
	\begin{bmatrix}
	A(X) & 0 & 0\\
	0 & B(X) & D(X) \\
	0 & D(X) & C(X)
	\end{bmatrix}
	\,,
\end{equation}
%-----------------------------
where $A$, $B$, $C$, and $D$ are arbitrary twice-differentiable functions of $X$.

%-----------------------------
%-----------------------------
\subsection{Family 2: Straightening, stretching, and shearing of a sector of a cylindrical shell}

This family of deformations, with respect to the cylindrical $(R,\Theta,Z)$ and Cartesian $(x,y,z)$ coordinates in the reference and current configurations, respectively, have the representation \eqref{eqn:Family2-u}.
Motivated by the structure of $\mathbf{C}^\flat$ in \eqref{C-Family2}, we seek universal residual stress fields of the form
%-----------------------------
\begin{equation} 
	[\mathring{S}^{AB}]
	=
	\begin{bmatrix}
	f_1(R) & f_4(R) & f_6(R)\\
	f_4(R) & f_2(R) & f_5(R) \\
	f_6(R) & f_5(R) & f_3(R)
	\end{bmatrix}
	\,,
\end{equation}
%-----------------------------
where $f_i$, $i=1,\ldots,6$, are unknown functions of $R$. Substituting this ansatz into the universality PDEs, the condition $\accentset{s}{\mathcal{Z}}^{(12)}_{12}= \accentset{s}{\mathcal{Z}}^{(12)}_{21}$ yields $f_4(R)=0$, while the condition $\accentset{s}{\mathcal{Z}}^{(11)}_{13}= \accentset{s}{\mathcal{Z}}^{(11)}_{31}$ yields $f_6(R)=0$. All remaining universality constraints are then satisfied identically without imposing any further restrictions on $f_1(R)$, $f_2(R)$, $f_3(R)$, and $f_5(R)$. Consequently, the universal residual stress fields are given by
%-----------------------------
\begin{equation} 
	[\mathring{S}^{AB}]
	=
	\begin{bmatrix}
	A(R) & 0 & 0\\
	0 & B(R) & D(R) \\
	0 & D(R) & C(R)
	\end{bmatrix}
	\,,
\end{equation}
%-----------------------------
where $A$, $B$, $C$, and $D$ are arbitrary twice-differentiable functions of $R$.

%-----------------------------
%-----------------------------
\subsection{Family 3: Inflation, bending, torsion, extension, and shearing of a sector of an annular wedge}

With respect to the cylindrical coordinates $(R,\Theta,Z)$ and $(r,\theta,z)$ in the reference and current configurations, respectively, this family of deformations have the representation given in \eqref{eqn:Family3-u}.
We begin with the most general symmetric residual stress tensor consistent with the radial symmetry of the deformation, namely,
%-----------------------------
\begin{equation} 
	[\mathring{S}^{AB}]
	= \begin{bmatrix}
	f_1(R) & f_4(R) & f_6(R)\\
	f_4(R) & f_2(R) & f_5(R) \\
	f_6(R) & f_5(R) & f_3(R)
	\end{bmatrix}
	\,,
\end{equation}
%-----------------------------
where the six functions $f_i(R)$ are to be determined from the universality constraints. 
Among these constraints, the relations $\accentset{s}{\mathcal{Z}}^{(11)}_{13}= \accentset{s}{\mathcal{Z}}^{(11)}_{31}$ and $\accentset{s}{\mathcal{Z}}^{(11)}_{23}= \accentset{s}{\mathcal{Z}}^{(11)}_{32}$ reduce to the linear system
%-----------------------------
\begin{equation} 
\begin{dcases}
	C_1f_4(R)+C_2f_6(R)=0\,,\\
	C_3f_4(R)+C_4f_6(R)=0\,.
\end{dcases}
\end{equation}
%-----------------------------
Since $K=C_1C_4-C_2C_3\neq0$, the coefficient matrix is nonsingular, and hence the unique solution is
%-----------------------------
\begin{equation}
	f_4(R)=f_6(R)=0\,.
\end{equation}
%-----------------------------
No additional conditions arise from the remaining universality PDEs, which are satisfied identically. Thus $f_1(R)$, $f_2(R)$, $f_3(R)$, and $f_5(R)$ remain arbitrary, and the universal residual stress fields take the form
%-----------------------------
\begin{equation} 
	[\mathring{S}^{AB}]
	=
	\begin{bmatrix}
	A(R) & 0 & 0\\
	0 & B(R) & D(R) \\
	0 & D(R) & C(R)
	\end{bmatrix}
	\,,
\end{equation}
%-----------------------------
where $A$, $B$, $C$, and $D$ are arbitrary twice-differentiable functions of $R$.

%-----------------------------
\begin{example}
A subset of Family $3$ was studied by \citet{MerodioOgden2016} in their analysis of the inflation, extension, and torsion of a residually stressed incompressible circular cylindrical tube. Their deformation is given by
%-----------------------------
\begin{equation}
	r=\sqrt{a^2+\lambda_z^{-1}(R^2-A^2)}\,,\qquad
	\theta=\Theta+\psi\lambda_z Z\,,\qquad
	z=\lambda_z Z\,.
\end{equation}
%-----------------------------
Comparing this deformation with \eqref{eqn:Family3-u}, one concludes that $C_1=1$, $C_2=\psi\lambda_z$, $C_3=0$, $C_4=\lambda_z$, and $C_6=C_7=0$, and consequently
%-----------------------------
\begin{equation}
	K=C_1C_4-C_2C_3=\lambda_z\,,\qquad
	C_5=a^2-\frac{A^2}{\lambda_z}\,.
\end{equation}
%-----------------------------
More specifically, the deformation considered by \citet{MerodioOgden2016} is a three-parameter subfamily of Family $3$. In the notation used here, it is obtained by setting $C_1=1$ and $C_3=0$, leaving $C_2$, $C_4$, and $C_5$ as the independent parameters.

The residual stress field considered by \citet{MerodioOgden2016} is more restrictive than the one considered in the present work. They assume from the outset that the residual stress tensor is diagonal in cylindrical coordinates and depends only on the radial coordinate $R$:
%-----------------------------
\begin{equation}
	[\mathring{S}^{AB}]
	=
	\begin{bmatrix}
		\mathring{S}^{RR}(R) & 0 & 0\\
		0 & \mathring{S}^{\Theta\Theta}(R) & 0\\
		0 & 0 & \mathring{S}^{ZZ}(R)
	\end{bmatrix}\,.
\end{equation}
%-----------------------------
The traction-free boundary conditions on the ends of the tube imply that $\mathring{S}^{ZZ}(R)=0$. Note that the traction-free boundary conditions on the cylindrical boundaries involve only the components $\mathring{S}^{RR}$, $\mathring{S}^{R\Theta}$, and $\mathring{S}^{RZ}$. However, we have shown that for any universal residual stress field $\mathring{S}^{R\Theta}(R)=\mathring{S}^{RZ}(R)=0$. 
Nevertheless, the residual stress field considered by \citet{MerodioOgden2016} is not the most general residual stress field compatible with the symmetry of the deformation and the traction-free boundary conditions on the ends of the tube. In particular, a non-vanishing $\Theta Z$ shear stress component is also admissible. The most general residual stress field compatible with the symmetry of the deformation and the traction-free boundary conditions on the ends of the tube is
%-----------------------------
\begin{equation}
	[\mathring{S}^{AB}]
	=
	\begin{bmatrix}
		\mathring{S}^{RR}(R) & 0 & 0\\
		0 & \mathring{S}^{\Theta\Theta}(R) & \mathring{S}^{\Theta Z}(R)\\
		0 & \mathring{S}^{\Theta Z}(R) & 0
	\end{bmatrix}\,.
\end{equation}
%-----------------------------
Denoting the inner and outer radii of the tube by $R_i$ and $R_o$, respectively, the traction boundary conditions imply that $\mathring{S}^{RR}(R_i)=\mathring{S}^{RR}(R_o)=0$.
\end{example}
%-----------------------------

%-----------------------------
%-----------------------------
\subsection{Family 4: Inflation/inversion of a sector of a spherical shell}

With respect to the spherical coordinates $(R,\Theta,\Phi)$ and $(r,\theta,\phi)$ in the reference and current configurations, respectively, this family of deformations have the representation \eqref{eqn:Family4-u}.
The rotational symmetry of $\mathbf{C}^\flat$ about the radial direction $\hat{\mathbf{R}}$ suggests the following ansatz for the residual stress field:
%-----------------------------
\begin{equation} 
	\mathring{\mathbf{S}}(\mathbf{X})=f_1(R)\,\hat{\mathbf{R}}\otimes\hat{\mathbf{R}}
	+f_2(R)\,(\mathbf{1}-\hat{\mathbf{R}}\otimes\hat{\mathbf{R}})	\,,
\end{equation}
%-----------------------------
where $f_1(R)$ and $f_2(R)$ are arbitrary twice-differentiable functions. In spherical coordinates, this ansatz has the representation
%-----------------------------
\begin{equation} \label{Residual-Strss-Family4}
	[\mathring{S}^{AB}]
	=\begin{bmatrix}
	f_1(R) & 0 & 0\\
	0 & \frac{f_2(R)}{R^2} & 0\\
	0 & 0 & \frac{f_2(R)}{R^2\sin^2\Theta}
	\end{bmatrix}
	\,.
\end{equation}
%-----------------------------
Substitution into the universality PDEs shows that all constraints are satisfied identically, and hence no further restrictions are imposed on $f_1(R)$ and $f_2(R)$. Therefore, \eqref{Residual-Strss-Family4} characterizes the most general universal residual stress field for this family.

%-----------------------------
%-----------------------------
\subsection{Family 5: Inflation, bending, extension, and azimuthal shearing of an annular wedge}

With respect to the cylindrical coordinates $(R,\Theta,Z)$ and $(r,\theta,z)$ in the reference and current configurations, respectively, this family of deformations have the representation \eqref{eqn:Family5-u}.
Motivated by the radial symmetry of the deformation, we consider the following general symmetric residual stress field:
%-----------------------------
\begin{equation} 
	[\mathring{S}^{AB}]
	=
	\begin{bmatrix}
	f_1(R) & f_4(R) & f_6(R)\\
	f_4(R) & f_2(R) & f_5(R) \\
	f_6(R) & f_5(R) & f_3(R)
	\end{bmatrix}
	\,,
\end{equation}
%-----------------------------
where $f_i(R)$, $i=1,\ldots,6$, are unknown functions to be determined from the universality constraints.

The universality constraint $\mathcal{Z}_{13}=\mathcal{Z}_{31}$ reduces to the ordinary differential equation
%-----------------------------
\begin{equation}
	R^2 f_6''(R)+R f_6'(R)-f_6(R)=0\,.
\end{equation}
%-----------------------------
This Cauchy--Euler equation has the general solution
%-----------------------------
\begin{equation}
	f_6(R)=A_1 R+\frac{A_2}{R}\,,
\end{equation}
%-----------------------------
where $A_1$ and $A_2$ are arbitrary constants.
Similarly, the universality constraint $\mathcal{X}_{13}=\mathcal{X}_{31}$ reduces to
%-----------------------------
\begin{equation}
	3f_5'(R)+Rf_5''(R)=0\,,
\end{equation}
%-----------------------------
whose general solution is
%-----------------------------
\begin{equation}
	f_5(R)=K_1+\frac{K_2}{R^2}\,,
\end{equation}
%-----------------------------
where $K_1$ and $K_2$ are arbitrary constants.

The universality constraint $\accentset{s}{\mathcal{C}}^{(44)}_{12}= \accentset{s}{\mathcal{C}}^{(44)}_{21}$ yields the ordinary differential equation
%-----------------------------
\begin{equation}
    C_1^2 f_1'(R)+\frac{f_3'(R)}{C_1^4 C_3^2}
    +C_1^2\Big(2 C_3^2 R f_2(R)+2 C_2 C_3 f_4(R)+C_2^2 f_1'(R)+C_3^2 R^2 f_2'(R)+2 C_2 C_3 R f_4'(R)\Big)
    =0\,,
\end{equation}
%-----------------------------
which can be rearranged as
%-----------------------------
\begin{equation} \label{ODE-Family5}
    R^2 f_2'(R)+2 R f_2(R)=  -\frac{1+C_2^2}{C_3^2}f_1'(R) -\frac{f_3'(R)}{C_1^6 C_3^4}
    -\frac{2 C_2}{C_3}f_4(R) -\frac{2 C_2}{C_3}R f_4'(R)    \,.
\end{equation}
%-----------------------------
Noting that
%-----------------------------
\begin{equation}
    \frac{d}{dR}\!\left(R^2 f_2(R)\right) =  R^2 f_2'(R) +2R f_2(R) \,,\qquad
    \frac{d}{dR}\!\left(R f_4(R)\right) = f_4(R) +R f_4'(R)   \,,
\end{equation}
%-----------------------------
equation \eqref{ODE-Family5} can be written as
%-----------------------------
\begin{equation}
    \frac{d}{dR}\!\left(R^2 f_2(R)\right)=  -\frac{1+C_2^2}{C_3^2}f_1'(R)
    -\frac{f_3'(R)}{C_1^6 C_3^4} -\frac{2 C_2}{C_3} \frac{d}{dR}\!\left(R f_4(R)\right)
    \,.
\end{equation}
%-----------------------------
Integrating with respect to $R$ yields
%-----------------------------
\begin{equation}
    f_2(R)=\frac{K_3}{R^2} -\frac{(1+C_2^2)f_1(R)}{C_3^2 R^2}
    -\frac{f_3(R)}{C_1^6 C_3^4 R^2} -\frac{2 C_2 f_4(R)}{C_3 R}    \,,
\end{equation}
%-----------------------------
where $K_3$ is an integration constant.

The universality constraint $\mathcal{Z}_{12}=\mathcal{Z}_{21}$ reduces to
%-----------------------------
\begin{equation}
	-C_1\Big[3 C_3 f_4(R)+3 C_2 f_1'(R)+R\big(5 C_3 f_4'(R)+C_2 f_1''(R)+C_3 R f_4''(R)\big)\Big]=0	\,.
\end{equation}
%-----------------------------
Since $C_1\neq0$, this equation is equivalent to
%-----------------------------
\begin{equation}
	C_3\big[R^2 f_4''(R)+5R f_4'(R)+3f_4(R)\big]+C_2\big[R f_1''(R)+3f_1'(R)\big]=0	\,.
\end{equation}
%-----------------------------
Solving for $f_4(R)$, one obtains
%-----------------------------
\begin{equation}
	f_4(R)=-\frac{C_2 f_1(R)}{C_3R}
	+\frac{B_1}{2C_3R}
	+\frac{B_2}{R^3}
	\,,
\end{equation}
%-----------------------------
where $B_1$ and $B_2$ are arbitrary constants.

The universality constraint $\mathcal{W}_{12}=\mathcal{W}_{21}$ reduces to
%-----------------------------
\begin{equation} \label{W12-W21}
	(2B_2C_3R+B_1R^3)f_3''(R)+(-2B_2C_3+3B_1R^2)f_3'(R)
	-16A_1(A_1C_2+C_3K_1)R^3=0	\,.
\end{equation}
%-----------------------------
For $B_1\neq0$, the general solution is
%-----------------------------
\begin{equation}
	f_3(R)=
	D_1+\frac{2A_1(A_1C_2+C_3K_1)}{B_1}R^2
	+\frac{D_2}{B_1R^2+2B_2C_3}
	\,,
\end{equation}
%-----------------------------
where $D_1$ and $D_2$ are arbitrary constants.
The universality constraint $\mathcal{X}_{12}=\mathcal{X}_{21}$ further implies that
%-----------------------------
\begin{equation}
	(A_1^2C_2+A_1C_3K_1)(2B_2C_3+B_1R^2)^3
	-B_1^2B_2C_3D_2=0	\,.
\end{equation}
%-----------------------------
We first consider the case $B_2\neq0$. Since $B_1\neq0$ and the above equation must hold for all $R$, the $R$-dependent term must vanish. Thus,
%-----------------------------
\begin{equation}
	A_1(A_1C_2+C_3K_1)=0\,.
\end{equation}
%-----------------------------
Hence, either $A_1=0$ or $K_1=-\frac{A_1C_2}{C_3}$. The remaining constant term implies that
%-----------------------------
\begin{equation}
	-B_1^2B_2C_3D_2=0\,.
\end{equation}
%-----------------------------
Since $B_1\neq0$, $B_2\neq0$, and $C_3\neq0$, it follows that $D_2=0$. Therefore, in this case there are two possibilities:
Case (i) $A_1=0$ and $D_2=0$;
Case (ii) $K_1=-\frac{A_1C_2}{C_3}$ and $D_2=0$.

Next, consider the case $B_2=0$. The universality constraint reduces to
%-----------------------------
\begin{equation}
	(A_1^2C_2+A_1C_3K_1)(B_1R^2)^3=0\,.
\end{equation}
%-----------------------------
Since $B_1\neq0$, this equation is equivalent to
%-----------------------------
\begin{equation}
	A_1(A_1C_2+C_3K_1)=0\,.
\end{equation}
%-----------------------------
Hence, either $A_1=0$ or $K_1=-\frac{A_1C_2}{C_3}$. In contrast to the previous case, no restriction is imposed on $D_2$, and therefore $D_2$ remains arbitrary. Thus, there are two possibilities in this case:
Case (iii) $B_2=0$, $A_1=0$;
Case (iv) $B_2=0$, $K_1=-\frac{C_2}{C_3}A_1$.

%-----------------------------
\begin{itemize}[topsep=0pt,noitemsep, leftmargin=10pt]
\item Case (i) $A_1=0$, $D_2=0$: The universality constraint $\mathcal{W}_{13}=\mathcal{W}_{31}$ reduces to
%-----------------------------
\begin{equation}
	8B_2C_3(A_2C_2+C_3K_2)-A_2R^3f_1'(R)+A_2R^4f_1''(R)=0	\,.
\end{equation}
%-----------------------------
If $A_2=0$, this equation reduces to $8B_2C_3^2K_2=0$. Since $B_2\neq0$ and $C_3\neq0$, it follows that $K_2=0$.
The universality constraint $\mathcal{K}_{13}=\mathcal{K}_{31}$ reduces to
%-----------------------------
\begin{equation}
	K_1\Big(3f_1'(R)+Rf_1''(R)\Big)=0\,.
\end{equation}
%-----------------------------
If $K_1=0$, the equation is identically satisfied. The universality constraint $\mathcal{B}^{(7)}_{12} =\mathcal{B}^{(7)}_{21}$ then implies that $3f_1'(R)+Rf_1''(R)=0$, and hence
%-----------------------------
\begin{equation}\label{f1-solution}
	f_1(R)=E_1+\frac{E_2}{R^2}	\,,
\end{equation}
%-----------------------------
where $E_1$ and $E_2$ are arbitrary constants. If $K_1\neq0$, then $3f_1'(R)+Rf_1''(R)=0$, which again has the solution \eqref{f1-solution}.

The universality constraint $\mathcal{B}^{(5)}_{12} =\mathcal{B}^{(5)}_{21}$ now reduces to $B_2^2C_3^2+E_2^2=0$, and hence $B_2=E_2=0$. Therefore, when $A_2=0$, Case (i) is not possible.

\vskip 0.1in
If $A_2\neq0$, dividing by $A_2$ gives
%-----------------------------
\begin{equation}
	R^4f_1''(R)-R^3f_1'(R)
	=-\frac{8B_2C_3(A_2C_2+C_3K_2)}{A_2}\,.
\end{equation}
%-----------------------------
Integrating twice, one obtains
%-----------------------------
\begin{equation}
	f_1(R)= E_1+E_2R^2+\frac{B_2C_3(A_2C_2+C_3K_2)}{A_2R^2}	\,,
\end{equation}
%-----------------------------
where $E_1$ and $E_2$ are arbitrary constants. Next, the universality constraint $\mathcal{B}^{(7)}_{12} =\mathcal{B}^{(7)}_{21}$ implies that $E_2=0$. The universality constraint $\mathcal{W}_{13}=\mathcal{W}_{31}$ further reduces to $B_2(A_2C_2+K_2C_3)=0$. Since $C_2\neq0$, one obtains $A_2=-\frac{C_3}{C_2}K_2$.

The universality constraint $\mathcal{X}^{(7)}_{12} =\mathcal{X}^{(7)}_{21}$ then implies that $B_2^2 C_1^7 C_2 C_3^2 \left(1+C_2^2+C_3^2\right)=0$, and hence $B_2=0$. Therefore, Case (i) is not possible, i.e., one must have $B_2=0$.

\item Case (ii) $K_1=-\frac{C_2}{C_3}A_1$, $D_2=0$: 
The universality constraint $\mathcal{B}^{(7)}_{12} =\mathcal{B}^{(7)}_{21}$ implies that $3f_1'(R)+Rf_1''(R)=0$, and hence $f_1(R)$ is given by \eqref{f1-solution}. The universality constraint $\mathcal{B}^{(4)}_{12} =\mathcal{B}^{(4)}_{21}$ reduces to
%-----------------------------
\begin{equation}
	C_2\left(B_2^2 C_1^6 C_3^4+C_1^6 C_3^2 E_2^2+A_1^2 R^6\right)=0\,,
\end{equation}
%-----------------------------
and therefore $B_2=E_2=A_1=0$. Thus, Cases (i) and (ii) are not possible, and one must have $B_2=0$.

\item Case (iii) $B_2=0$, $A_1=0$: The universality constraint $\mathcal{B}^{(7)}_{12} =\mathcal{B}^{(7)}_{21}$ implies that $3f_1'(R)+Rf_1''(R)=0$, and hence $f_1(R)$ is given by \eqref{f1-solution}. The universality constraint $\mathcal{W}_{13}=\mathcal{W}_{31}$ reduces to $A_2\left(D_2+B_1C_1^6C_3^2E_2\right)=0$, and therefore either $A_2=0$ or $D_2=-B_1C_1^6C_3^2E_2$.

If $A_2=0$, the universality constraint $\mathcal{K}_{13}=\mathcal{K}_{31}$ implies that either $E_2=0$ or $K_2=0$.
If $E_2=0$, the universality constraints $\mathcal{B}^{(4)}_{12} =\mathcal{B}^{(4)}_{21}$ and $\mathcal{B}^{(4)}_{23} =\mathcal{B}^{(4)}_{32}$ imply that $D_2=K_1=0$. Then, $\mathcal{B}^{(55)}_{12} =\mathcal{B}^{(55)}_{21}$ implies that $K_2=0$. The remaining universality constraints are identically satisfied.
If $K_2=0$, the universality constraint $\mathcal{K}_{12}=\mathcal{K}_{21}$ implies that either $B_1=0$ or $K_1=0$. Since we are assuming $B_1\neq0$, it follows that $K_1=0$. The universality constraint $\mathcal{K}_{13}=\mathcal{K}_{31}$ then implies that $E_2=0$. Next, $\mathcal{B}^{(4)}_{12} =\mathcal{B}^{(4)}_{21}$ implies that $D_2=0$. The remaining universality constraints are identically satisfied.

If $D_2=-B_1C_1^6C_3^2E_2$, the universality constraint $\mathcal{B}^{(5)}_{12} =\mathcal{B}^{(5)}_{21}$ implies that $E_2=K_1=0$. Next, $\mathcal{B}^{(55)}_{12} =\mathcal{B}^{(55)}_{21}$ implies that $A_2=K_2=0$, which reduces to the previous case.

In summary, the universal residual stress field in this case has the following representation:
%-----------------------------
\begin{equation} \label{Universal-ResidualStress-1}
	[\mathring{S}^{AB}]
	= \begin{bmatrix}
		E_1 &\frac{B_1-2C_2E_1}{2C_3R} & 0\\
		\frac{B_1-2C_2E_1}{2C_3R} &
		-\frac{B_1C_1^6C_2C_3^2+D_1
			+C_1^6C_3^2(E_1-C_2^2E_1-C_3^2K_3)}{C_1^6C_3^4R^2} & 0\\
		0 & 0 & D_1
	\end{bmatrix}
	\,.
\end{equation}
%-----------------------------

\item Case (iv) $B_2=0$, $K_1=-\frac{C_2}{C_3}A_1$: 
The universality constraint $\mathcal{B}^{(7)}_{12} =\mathcal{B}^{(7)}_{21}$ implies that $3f_1'(R)+Rf_1''(R)=0$, and hence $f_1(R)$ is given by \eqref{f1-solution}. The universality constraint $\mathcal{Z}^{(5)}_{12} =\mathcal{Z}^{(5)}_{21}$ implies that $A_1=0$. Next, $\mathcal{W}_{13}=\mathcal{W}_{31}$ implies that $A_2\left(D_2+B_1C_1^6C_3^2E_2\right)=0$, and therefore either $A_2=0$ or $D_2=-B_1C_1^6C_3^2E_2$.
If $A_2=0$, $\mathcal{C}^{(5)}_{12} =\mathcal{C}^{(5)}_{21}$ implies that $D_2=0$. Similarly, $\mathcal{B}^{(5)}_{12} =\mathcal{B}^{(5)}_{21}$ implies that $E_2=0$. In addition, $\mathcal{C}^{(59)}_{12} =\mathcal{C}^{(59)}_{21}$ implies that $K_2=0$. The remaining universality constraints are identically satisfied.
If $D_2=-B_1C_1^6C_3^2E_2$, the universality constraint $\mathcal{B}^{(5)}_{12} =\mathcal{B}^{(5)}_{21}$ implies that $E_2=0$. Then, $\mathcal{Z}^{(5)}_{13} =\mathcal{Z}^{(5)}_{31}$ implies that either $D_2=0$ or $A_2=0$.
If $A_2=0$, $\mathcal{B}^{(55)}_{12} =\mathcal{B}^{(55)}_{21}$ implies that $K_2=0$. The remaining universality constraints are identically satisfied.
If $D_2=0$, $\mathcal{B}^{(55)}_{12} =\mathcal{B}^{(55)}_{21}$ implies that $A_2=0$, which reduces to the previous case.

\end{itemize}
%-----------------------------

\vskip 0.1in \noindent
Now let us return to \eqref{W12-W21} and assume that $B_1=0$. Thus,
%-----------------------------
\begin{equation} \label{W12-W21-reduced}
	2B_2C_3R f_3''(R)-2B_2C_3 f_3'(R)
	-16A_1(A_1C_2+C_3K_1)R^3=0
	\,.
\end{equation}
%-----------------------------
We first consider the case $B_2\neq0$. Since $C_3\neq0$, equation \eqref{W12-W21-reduced} is equivalent to
%-----------------------------
\begin{equation}
	Rf_3''(R)-f_3'(R)
	=\frac{8A_1(A_1C_2+C_3K_1)}{B_2C_3}R^3
	\,.
\end{equation}
%-----------------------------
The general solution is
%-----------------------------
\begin{equation}
	f_3(R)=D_1+D_2R^2
	+\frac{A_1(A_1C_2+C_3K_1)}{B_2C_3}R^4
	\,,
\end{equation}
%-----------------------------
where $D_1$ and $D_2$ are arbitrary constants.
The universality constraint $\mathcal{X}_{12}=\mathcal{X}_{21}$ implies that $D_2=0$. The universality constraint $\mathcal{H}_{12}=\mathcal{H}_{21}$ implies that either $A_1=0$ or $K_1=-\frac{A_1C_2}{C_3}$.

If $A_1=0$, the universality constraint $\mathcal{C}^{(7)}_{12} =\mathcal{C}^{(7)}_{21}$ implies that $3f_1'(R)+Rf_1''(R)=0$, and hence $f_1(R)$ is given by \eqref{f1-solution}. Next, $\mathcal{B}^{(9)}_{12} =\mathcal{B}^{(9)}_{21}$ implies that $B_2=0$, which is a contradiction.

If $K_1=-\frac{A_1C_2}{C_3}$, the universality constraint $\mathcal{C}^{(8)}_{12} =\mathcal{C}^{(7)}_{21}$ implies that $3f_1'(R)+Rf_1''(R)=0$, and hence $f_1(R)$ is given by \eqref{f1-solution}. From $\mathcal{K}_{12} =\mathcal{K}_{21}$ one concludes that $A_1=0$. Next, the universality constraint $\mathcal{B}^{(5)}_{12} =\mathcal{B}^{(5)}_{21}$ implies that $B_2=E_2=0$, which is a contradiction.

Now knowing that $B_2=0$, equation \eqref{W12-W21-reduced} reduces to $A_1(A_1C_2+C_3K_1)=0$. Thus, either $A_1=0$ or $K_1=-\frac{A_1C_2}{C_3}$.
If $A_1=0$, the universality constraint $\mathcal{X}_{12} =\mathcal{X}_{21}$ implies that $3f_3'(R)+Rf_3''(R)=0$, and hence
%-----------------------------
\begin{equation} \label{f3-solution}
	f_3(R)=D_1+\frac{D_2}{R^2}	\,,
\end{equation}
%-----------------------------
where $D_1$ and $D_2$ are arbitrary constants. Next, the universality constraint $\mathcal{B}^{(7)}_{12} =\mathcal{B}^{(7)}_{21}$ implies that $3f_1'(R)+Rf_1''(R)=0$, and hence $f_1(R)$ is given by \eqref{f1-solution}. From $\mathcal{C}^{(5)}_{12} =\mathcal{C}^{(5)}_{21}$ one concludes that $K_1=0$.
The universality constraint $\mathcal{B}^{(5)}_{12} =\mathcal{B}^{(5)}_{21}$ implies that $D_2=E_2=0$. Next, $\mathcal{Z}^{(5)}_{13} =\mathcal{Z}^{(5)}_{31}$ implies that $A_2=0$. Also, $\mathcal{B}^{(55)}_{12} =\mathcal{B}^{(55)}_{21}$ implies that $K_2=0$. The remaining universality constraints are identically satisfied. Therefore, the universal residual stresses have the form
%-----------------------------
\begin{equation} \label{Universal-ResidualStress-2}
	[\mathring{S}^{AB}]	=
	\begin{bmatrix}
		E_1 & -\dfrac{C_2E_1}{C_3R} & 0\\
		-\dfrac{C_2E_1}{C_3R} &
		\dfrac{-D_1+C_1^6C_3^2\Big[(-1+C_2^2)E_1+C_3^2K_3\Big]}{C_1^6C_3^4R^2} & 0\\
		0 & 0 & D_1
	\end{bmatrix}
	\,.
\end{equation}
%-----------------------------

If $K_1=-\frac{A_1C_2}{C_3}$, the universality constraint $\mathcal{X}_{12} =\mathcal{X}_{21}$ implies that $3f_3'(R)+Rf_3''(R)=0$, and hence $f_3(R)$ is given by \eqref{f3-solution}. Next, the universality constraint $\mathcal{B}^{(7)}_{12} =\mathcal{B}^{(7)}_{21}$ implies that $3f_1'(R)+Rf_1''(R)=0$, and hence $f_1(R)$ is given by \eqref{f1-solution}.
From $\mathcal{C}^{(5)}_{12} =\mathcal{C}^{(5)}_{21}$ one concludes that $A_1=D_2=E_2=0$. Next, $\mathcal{B}^{(55)}_{12} =\mathcal{B}^{(55)}_{21}$ implies that $A_2=K_2=0$. The remaining universality constraints are identically satisfied. Thus, the universal residual stress fields are identical to those in the previous case, namely \eqref{Universal-ResidualStress-2}.
Note that the residual stress field \eqref{Universal-ResidualStress-2} is obtained from \eqref{Universal-ResidualStress-1} by setting $B_1=0$. Therefore, \eqref{Universal-ResidualStress-2} is a special case of \eqref{Universal-ResidualStress-1}.

Table \ref{table:UniversalResidualStresses} summarizes the main results of this section. For each of the six known families of universal deformations, it lists the corresponding right Cauchy--Green tensor and the associated class of universal residual stress fields.

\vskip 0.2in
%-----------------------------
\begin{table}[hbt!]
\begin{center}
\resizebox{\textwidth}{!}{
\begin{tabular}{|c|c|c|c|}
\hline
Family & Universal Deformations & $\mathbf{C}^{\flat}$ & Universal residual stresses \\
\hline
$0$ & Homogeneous deformations & $[C_{AB}]=\text{constant}$ & $[\mathring{S}^{AB}]=\text{constant}$ \\
\hline
$1$ &
$\begin{cases}
	r(X,Y,Z)=\sqrt{C_1(2X+C_4)}\\
	\theta(X,Y,Z)=C_2(Y+C_5)\\
	z(X,Y,Z)=\frac{Z}{C_1C_2}-C_2C_3Y+C_6
\end{cases}$
&
$[C_{AB}]=\begin{bmatrix}
	\frac{C_1}{2X+C_4} & 0 & 0\\
	0 & C_2^2\left[C_1(2X+C_4)+C_3^2\right] & -\frac{C_3}{C_1}\\
	0 & -\frac{C_3}{C_1} & \frac{1}{C_1^2C_2^2}
\end{bmatrix}$
&
$[\mathring{S}^{AB}]=
\begin{bmatrix}
	A(X) & 0 & 0\\
	0 & B(X) & D(X)\\
	0 & D(X) & C(X)
\end{bmatrix}$
\\
\hline
$2$ &
$\begin{cases}
	x(R,\Theta,Z)=\frac{1}{2}C_1C_2^2R^2+C_4\\
	y(R,\Theta,Z)=\frac{\Theta}{C_1C_2}+C_5\\
	z(R,\Theta,Z)=\frac{C_3}{C_1C_2}\Theta+\frac{1}{C_2}Z+C_6
\end{cases}$
&
$[C_{AB}]=\begin{bmatrix}
	C_1^2C_2^4R^2 & 0 & 0\\
	0 & \frac{C_3^2+1}{C_1^2C_2^2} & \frac{C_3}{C_1C_2^2}\\
	0 & \frac{C_3}{C_1C_2^2} & \frac{1}{C_2^2}
\end{bmatrix}$
&
$[\mathring{S}^{AB}]=
\begin{bmatrix}
	A(R) & 0 & 0\\
	0 & B(R) & D(R)\\
	0 & D(R) & C(R)
\end{bmatrix}$
\\
\hline
$3$ &
$\begin{cases}
	r(R,\Theta,Z)=\sqrt{\frac{R^2}{K}+C_5}\\
	\theta(R,\Theta,Z)=C_1\Theta+C_2Z+C_6\\
	z(R,\Theta,Z)=C_3\Theta+C_4Z+C_7
\end{cases}$
&
$[C_{AB}]=\begin{bmatrix}
	\frac{R^2}{K(KC_5+R^2)} & 0 & 0\\
	0 & C_1^2\left(\frac{R^2}{K}+C_5\right)+C_3^2
	& C_1C_2\left(\frac{R^2}{K}+C_5\right)+C_3C_4\\
	0 & C_1C_2\left(\frac{R^2}{K}+C_5\right)+C_3C_4
	& C_2^2\left(\frac{R^2}{K}+C_5\right)+C_4^2
\end{bmatrix}$
&
$[\mathring{S}^{AB}]=
\begin{bmatrix}
	A(R) & 0 & 0\\
	0 & B(R) & D(R)\\
	0 & D(R) & C(R)
\end{bmatrix}$
\\
\hline
$4$ &
$\begin{cases}
	r(R,\Theta,\Phi)=(\pm R^3+C_1^3)^{\frac{1}{3}}\\
	\theta(R,\Theta,\Phi)=\pm\Theta\\
	\phi(R,\Theta,\Phi)=\Phi
\end{cases}$
&
$[C_{AB}]=\begin{bmatrix}
	\frac{R^4}{\left(C_1^3\pm R^3\right)^{4/3}} & 0 & 0\\
	0 & \left(C_1^3\pm R^3\right)^{2/3} & 0\\
	0 & 0 & \left(C_1^3\pm R^3\right)^{2/3}\sin^2\Theta
\end{bmatrix}$
&
$[\mathring{S}^{AB}]=
\begin{bmatrix}
	f_1(R) & 0 & 0\\
	0 & \frac{f_2(R)}{R^2} & 0\\
	0 & 0 & \frac{f_2(R)}{R^2\sin^2\Theta}
\end{bmatrix}$
\\
\hline
$5$ &
$\begin{cases}
	r(R,\Theta,Z)=C_1R\\
	\theta(R,\Theta,Z)=C_2\log R+C_3\Theta+C_4\\
	z(R,\Theta,Z)=\frac{1}{C_1^2C_3}Z+C_5
\end{cases}$
&
$[C_{AB}]=\begin{bmatrix}
	C_1^2(C_2^2+1) & C_1^2C_2C_3R & 0\\
	C_1^2C_2C_3R & C_1^2C_3^2R^2 & 0\\
	0 & 0 & \frac{1}{C_1^4C_3^2}
\end{bmatrix}$
&
$[\mathring{S}^{AB}]=
\begin{bmatrix}
	A_1 & \frac{A_2}{2C_3R} & 0\\
	\frac{A_2}{2C_3R} & \frac{A}{R^2} & 0\\
	0 & 0 & A_4
\end{bmatrix}$
\\
\hline
\end{tabular}}
\end{center}
\vskip -0.1in
\caption[]{Universal residual stress fields corresponding to the six known families of universal deformations of incompressible isotropic elasticity. The presence of residual stress does not enlarge the class of universal deformations. For Family $3$, $K=C_1C_4-C_2C_3\neq 0$.
Also, for Family $5$, $A=-\frac{A_2C_1^6C_2C_3^2	+A_4+C_1^6C_3^2(1+C_2^2)A_1
-C_1^6C_3^4A_3}{C_1^6C_3^4}$.}
\label{table:UniversalResidualStresses}
\end{table}
%-----------------------------

%-----------------------------
\begin{remark}
It is worth noting that, for Families $1$--$4$, the universal residual stress fields obtained in the present formulation have the same structural form as the corresponding universal material metrics of \citet{Goodbrake2020}. This correspondence does not hold for Family $5$.
At first sight, this discrepancy may appear surprising because universality should be an intrinsic property of a deformation and therefore independent of the particular representation used to model material inhomogeneity. However, the two theories are based on different constitutive classes. In \citep{Goodbrake2020}, the material metric is the primary unknown and residual stress is induced indirectly through the constitutive law, whereas in the present theory the residual stress is treated as an independent constitutive field. Consequently, the universality constraints obtained here are substantially stronger and lead to a larger system of universality PDEs. The fact that the two approaches nevertheless produce identical universal deformations is analogous to the well-known situation in classical elasticity where Cauchy elasticity and hyperelasticity possess the same universal deformations despite representing different constitutive classes \citep{Yavari2024Cauchy,MotaghianYavari2026}. The discrepancy observed for the residual stress fields of Family $5$ therefore does not indicate an inconsistency between the two theories. Rather, it suggests that the residual-stress formulation imposes additional restrictions on the admissible residual stress fields associated with a given universal deformation. In particular, while a universal deformation of Family $5$ may be supported by a broad family of material metrics, only a smaller subclass of the corresponding induced residual stress fields remains universal when residual stress is treated as an independent constitutive variable.
\end{remark}
%-----------------------------

%-----------------------------
%-----------------------------
\section{Conclusions}  \label{Sec:Conclusions}

We studied universal deformations in incompressible isotropic Cauchy elastic solids with residual stress. Starting from the constitutive representation of the Cauchy stress as an isotropic tensor-valued function of the deformation and residual stress, we derived the universality constraints for residually-stressed incompressible isotropic Cauchy elastic solids. These constraints generalize the classical universality equations of incompressible isotropic elasticity by incorporating the residual stress field as an additional constitutive variable.

A central result of this work is that residual stress does not enlarge the class of universal deformations. We show that, for each of the six known families of universal deformations, the set of universal deformations of incompressible isotropic Cauchy elastic solids with residual stress coincides with the corresponding set in the absence of residual stress. Thus, the presence of residual stress does not alter the class of universal deformations.

Motivated by this result, we determined the residual stress fields that are compatible with the six known families of universal deformations. Assuming that the residual stress field has the same symmetry as the corresponding universal deformation, the universality constraints reduce to systems of ordinary differential equations. These equations were solved explicitly for each family, and the corresponding universal residual stress fields were determined. 
The resulting universal residual stress fields reflect the symmetry structure of the associated universal deformation. For Families $1$, $2$, and $3$, they are characterized by four arbitrary functions of one variable; for Family $4$, by two arbitrary functions of one variable; and for Families $0$ and $5$, by a finite number of constants.

The present work extends the classical theory of universal deformations to residually-stressed incompressible isotropic Cauchy elastic solids. It also complements earlier studies of universal eigenstrains and universal inhomogeneities by addressing residual stress directly, without introducing an underlying material metric or eigenstrain distribution. The universality constraints derived here provide a framework for investigating more general classes of residually-stressed solids, including anisotropic materials and solids with additional internal constraints.

\bibliographystyle{abbrvnat}
\bibliography{ref,ref1}

\end{document}